\documentclass[12pt, reqno]{amsart}
\usepackage{amssymb}
\usepackage{amsmath}
\allowdisplaybreaks[1]
\usepackage{graphicx,color}
\usepackage{wrapfig,framed}
\usepackage[height=20cm, width=15cm, hmarginratio={1:1}]{geometry}
\usepackage[colorlinks=true,citecolor=red,linkcolor=blue]{hyperref}
\usepackage{bm,enumerate}

\newcommand{\im} {\mbox {Im}\hskip 0.5truemm}

\theoremstyle{plain}
\newtheorem{theorem}{Theorem}
\newtheorem{prop}{Proposition}
\newtheorem{lemma}{Lemma}
\newtheorem{cor}{Corollary}

\theoremstyle{definition}
\newtheorem{defi}{Definition}
\newtheorem*{remark}{Remark}
\newtheorem{example}{Example}

\newcommand{\beq}{\begin{equation}}
\newcommand{\eeq}{\end{equation}}
\newcommand{\nn}{\nonumber}

\newcommand{\CC}{{\mathbb C}}
\newcommand{\R}{{\mathcal R}}

\newcommand{\bt}{{\bf t}}

\newcommand{\p}{\partial}
\newcommand{\g}{{\mathfrak g}}

\def\={\; = \;}
\def\+{\, + \,}
\def\:={\; := \; }

\def \L {\mathcal L}

\newcommand{\ad}{\mathrm{ad}}

\newcommand{\Ker}{\mathrm{Ker}}

\newcommand{\fb}{\mathfrak{b}}
\newcommand{\fn}{\mathfrak{n}}

\newcommand{\mf}[1]{\mathfrak{#1}}
\newcommand{\mb}[1]{\mathbb{#1}}
\newcommand{\mc}[1]{\mathcal{#1}}

\DeclareMathOperator{\Der}{Der}

\def\wbigoplus{\mathop{\widehat{\bigoplus}}\limits}

\definecolor{light}{gray}{.9}

\begin{document}

\title[From tau-structure to integrable hierarchy]
{Remarks on intersection numbers and integrable hierarchies. II. 
Tau-structure}

\author{Daniele Valeri}
\address{Dipartimento di Matematica \& INFN, Sapienza Universit\`a di Roma,
P.le Aldo Moro 5, 00185 Roma, Italy}
\email{daniele.valeri@uniroma1.it}

\author{Di Yang}
\address{School of Mathematical Sciences, USTC, Hefei 230026, P.R. China}
\email{diyang@ustc.edu.cn}

\begin{abstract}
For systems of evolutionary partial differential equations the tau-structure is
an important notion which originated from the deep relation between integrable systems and quantum field theories.
We show that, under a certain non-degeneracy condition, existence of a
tau-structure implies integrability. 
As an example, we apply this principle to provide a new proof of the integrability of
the Drinfeld--Sokolov hierarchy associated to an arbitrary Kac--Moody algebra and 
a choice of a vertex of its Dynkin diagram.
\end{abstract}

\maketitle

\setcounter{tocdepth}{1}

\tableofcontents

\section{Introduction}
The study of the deep relation between topology  
of the Deligne--Mumford moduli space $\overline{\mathcal{M}}_{g,n}$ 
of stable algebraic curves~\cite{DM}
and integrable hierarchies started from 
Witten's conjecture~\cite{Witten} (first proved by Kontsevich~\cite{Kontsevich}), which states that  
the exponential of the generating series of certain intersection numbers on $\overline{\mathcal{M}}_{g,n}$ 
is a particular tau-function for the Korteweg--de Vries (KdV) hierarchy.
Later, generalizations were made through different 
directions (see for examples \cite{DZ-norm, LRZ, YZpreprint}).
Due to this deep relation, on one hand, the theory of 
integrable systems provides 
powerful tools and methods for understanding 
 topology of the moduli spaces, e.g. 
in computing the intersection numbers on 
$\overline{\mathcal{M}}_{g,n}$; 
on the other hand, the intersection numbers provide rich mathematical structures~\cite{Dubrovin1, KM, Manin} 
 that help to develop the theory of integrable systems.  
For more details, see for examples~\cite{BDY0, BDY, BPS, Dubrovin1, DLYZ1, DYZ, DZ-norm}.

This paper belongs to a paper series, where we plan to discuss several topics in mathematics that have important origins from   
the study of the relationships between intersection numbers and integrable hierarchies. The main topic for this paper is 
the notion of {\it tau-structure}. This notion is closely related to a more famous one, {\it tau-function},
which provides the tool that allows to present the relationships in an elegant and comprehensible way.

In the theory of integrable systems, the concept of tau-function
appeared in 1970s--1980s (see for examples~\cite{Hirota, Sato}). 
(Actually, theta-functions in the theory of Riemann surfaces,   
and partition functions in combinatorics, quantum field theory, statistical physics 
and random matrices 
are also closely related to tau-functions.) 
In this approach the tau-function is provided by a solution to Hirota bilinear equations, and, geometrically, it corresponds to a point in an infinite dimensional manifold.

In this paper, we consider the notion of tau-function 
using the viewpoint from the Dubrovin--Zhang theory~\cite{DZ-norm} (cf.~\cite{Dubrovin1, DLYZ1, DYZ, Witten}). 
In this context,
certain polynomials associated to the tau-function (or, equivalently, to the corresponding system of partial differential equations), called the two-point correlation functions,
turn out to play an important role in understanding the geometry behind the tau-function 
itself \cite{BDY0, BPS, Dickey, DLYZ1, DYZ, DZ-norm, EYY, Witten}. Following Dubrovin and Zhang,
we refer to the two-point correlation functions as the tau-structure; actually, 
our terminology on tau-structure (see Definition~\ref{def:tau} below and see also~\cite{DYZ, YZpreprint}) is
more general than the one in~\cite{DZ-norm}. Recall that, in the relationships between intersection numbers and integrable hierarchies, 
the former should correspond to Taylor coefficients of the {\it logarithm} of some tau-function of an integrable hierarchy, and this is one of the main sources that motivates to define the tau-structure as the family of second 
derivatives of the logarithm of a tau-function (cf.~\eqref{eq:tau}).

The goal of this paper is to understand better the interplay between tau-structure and integrability
of systems of partial differential equations.
We show in Theorem~\ref{tauint} that under a certain non-degeneracy assumption (see \eqref{mondegen}) the existence of a tau-structure implies integrability.
As an application, using the tau-structure constructed in \cite{DVY}, we give a novel proof of the integrability of the 
Drinfeld--Sokolov (DS) hierarchy associated to an arbitrary affine Kac-Moody algebra and a choice of a vertex of its Dynkin diagram.
Compared to other proofs based on Hamiltonian formalism, our proof is entirely algebraic.
This is essentially due to the fact that the tau-structure of the DS hierarchy is an intrinsic object that encodes the
deep geometrical information of the hierarchy.

\medskip

\paragraph{\bf Organization of the paper.}
In Section \ref{sec:alg} we fix some notations and terminology and we provide the algebraic setup for the systems of partial differential equations we are interested in.
In Section~\ref{section2} we review the notion of tau-structure.
In Section~\ref{section3} we consider Miura-type transformation and prove Theorem \ref{tauint}.
In Section~\ref{section4} we revisit the example of the DS hierarchy, and 
give a new proof of its integrability as an application of Theorem \ref{tauint}.
The extension of this formalism to discrete integrable systems 
is briefly discussed in Appendix \ref{App:A}.

\medskip

\paragraph{\bf Acknowledgments.} We wish to thank Boris Dubrovin for his advice, helpful discussions and valuable suggestions.

D.V. is a member of the GNSAGA INdAM group and he acknowledges the financial support of the
project MMNLP (Mathematical Methods in Non Linear Physics) of the INFN.
D.Y. is supported in part by the CAS Project for Young Scientists in Basic Research No. YSBR-032, by 
NSFC No.~12371254, and by the National Key R and D Program of China 2020YFA0713100.

\section{Algebraic setup}\label{sec:alg}
In this section we review the algebraic setup used throughout the paper. See for example \cite{DSKV,Dickey,DZ-norm} for more details.

\subsection{Functions on space-time}\label{sec:fun}
Throughout the paper we let $\mc B$ be a given
commutative, associative, unital algebra
over a field $\mb F$
of characteristic $0$,
which consists of functions
in the space variable $ x$ and time variables $\bm t=(t_j)_{j\in E}$,
endowed with commuting derivations 
$$
\partial_x
\,\Big(=\frac{\partial}{\partial x}\Big),\,\frac{\partial}{\partial t_j}\,:\,\,\mc B\to\mc B\,,\quad j\in E\,,
$$
indexed by a countable set $E$. (Recall that, for an algebra $A$, a derivation is a linear (over $\mb F$) map $D:A\to A$ such that
$$
D(ab)=D(a)b+aD(b)
\,,
$$
for any $a,b\in A$.)
The elements of $\mc B$ are called \emph{functions on space-time}
(or simply \emph{functions}), and will be usually denoted as $f=f(x,\bm t)$.

We denote by $V$ the common kernel of all time derivatives:
$$
V
=
\bigg\{f\in\mc B\,\Big|\, \frac{\partial f}{\partial t_j}=0\,\text{ for all }\,j\in E\bigg\}
\,.
$$
We also assume that $\mc B$ is endowed with a surjective algebra homomorphism
$\mc B\twoheadrightarrow V$,
restricting to the identity map on $V\subset\mc B$,
which we shall call the \emph{evaluation at} $\bm t=0$,
and we shall denote as
$$
f=f(x,\bm t)
\,\mapsto\,
f(x,0)
=f|_{\bm t=0}
\,\in V
\,.
$$

\begin{defi}\label{def:integrable}
Given elements $f_j\in\mc B$, $j\in E$,
consider a system of equations on the unknown function $\varphi\in\mc B$:
\begin{equation}\label{20170721:eq2}
\frac{\partial\varphi}{\partial t_j}
=
f_j
\,\text{ for all }\,
j\in E
\,.
\end{equation}
The system \eqref{20170721:eq2} is called \emph{compatible}
if the following condition holds:
\begin{equation}\label{20170721:eq2b}
\frac{\partial f_i}{\partial t_j}
=
\frac{\partial f_j}{\partial t_i}
\,\text{ for all }\,
i,j\in E
\,.
\end{equation}
The algebra of functions on space-time $\mc B$ is said to be \emph{integrable}
if, for every $f(x)\in V$ and every compatible system of equations \eqref{20170721:eq2},
there exists a unique solution $\varphi\in\mc B$ such that $\varphi(x,0)=f(x)$.
\end{defi}

\begin{example}\label{20170724:ex}
The algebra $\mb F[x,t_j\mid j\in E]$ of polynomials in the variables $x$ and $t_j$, $j\in E$,
and the algebra $\mb F[[x,t_j\mid j\in E]]$ of formal power series, are both integrable.
\end{example}

\begin{example}\label{example:DYZ}
Let $V$ be an algebra of functions of $x$ closed under $\partial_x$. Then the algebra of functions
$\mc B=V[[t_j\mid j\in E]]$ is integrable, see \cite{DYZ}.
\end{example}

\subsection{Unknown variables}\label{sec:2.2}
Let~$\ell>0$ be an integer. An algebra of \emph{ordinary functions} $\mc O_{\bm u}$ in the ``unknown'' variables $\bm u=(u_1,\dots ,u_\ell)$ is an extension of the algebra of 
polynomials
$$
R_\ell=\mb F[u_\alpha\mid \alpha=1,\dots,\ell]
$$
endowed with linear maps $\frac{\partial}{\partial u_\alpha}:\mc O_{\bm u}\to\mc O_{\bm u}$, for every $\alpha=1,\dots,\ell$, which are commuting derivations
of the product in $\mc O_{\bm u}$ and extend the usual partial derivatives in $R$. 

Given an algebra of ordinary functions $\mc O_{\bm u}$ we consider the \emph{ring of differential polynomials} (with coefficients in $\mc O_{\bm u}$)
\beq\label{diffpoly}
\mathcal{A}(\mc O_{\bm u}) \= 
\mathcal{O}_{\bm u}[u_{\alpha,m}\mid \alpha=1,\dots,\ell, m\geq1]
\,.
\eeq
(If there is no ambiguity in the variables used, we will simply denote $\mc A(\mc O_{\bm u})$ by $\mc A$.)
We will often denote $u_{\alpha,0}=u_{\alpha}$, $\alpha=1,\dots,\ell$.
Note that on $\mc A$ we have natural partial derivatives $\frac{\partial}{\partial u_{\alpha,m}}:\mc A\to\mc A$ which all commute and  for every $f\in\mc A$ we have $\frac{\partial f}{\partial u_{\alpha,m}}=0$ for all but finitely many
$\alpha=1,\dots,\ell$ and $m\geq0$.
Let
\begin{equation}\label{partial}
\p \:= \sum_{\alpha=1}^\ell\sum_{m\geq 0} u_{\alpha,m+1} \frac{\p}{\p u_{\alpha,m}}\,.
\end{equation}
Clearly, $\partial$ defined in \eqref{partial} is a derivation of $\mc A$
and the pair $(\mathcal{A},\p)$ form a \emph{differential algebra}.
Note that $\partial(u_{\alpha,m})=u_{\alpha,m+1}$,
$\alpha=1,\dots,\ell$, $m\geq0$, and we have the commutation relations
$$
\left[\frac{\partial}{\partial u_{\alpha,m}},\partial\right]=\frac{\partial}{\partial u_{\alpha,m-1}}\,,
$$
where the RHS is considered to be zero if $m=0$.
Note that $\mc A$ is a particular case of an algebra of differential functions as defined in \cite{BDSK}.

In the sequel we will assume that the algebra $\mc B$ introduced in Section \ref{sec:fun} is integrable and that the following holds:
for every $\ell$-ple of functions on space-time $u_\alpha(x,\bm t)\in\mc B$, $\alpha=1,\dots,\ell$,
there exists a unique differential algebra homomorphism
$\mc A\rightarrow\mc B$,
mapping $u_{\alpha,m}\mapsto\partial_x^m u_\alpha(x,\bm t)$, $\alpha=1,\dots,\ell$, $m\geq0$,
which we shall call \emph{evaluation map},
and we shall denote as
\begin{equation}\label{20170718:eq9b}
\mc A\ni P\mapsto P|_{u_{\alpha,m}\mapsto\p_x^m(u_\alpha(x,\bt)),\, m \geq 0}\,\in\mc B
\,.
\end{equation}
By an abuse of notation we are denoting with the same symbol $u_\alpha$ an element in $\mc A$ and an element in $\mc B$. To stress the difference, in the latter case, we will always specify the dependence on the
variables $x$ and $\bm t$.
\begin{example}
Let $\mc O_{\bm u}=R_\ell$ (hence $\mc A=\mb F[u_{\alpha,m}\mid \alpha=1,\dots,\ell\,,m\geq0]$) and let $\mc B$ be any algebra of functions on space-time. Then the evaluation map \eqref{20170718:eq9b} exists for every $\ell$-ple of 
functions $u_\alpha(x,\bm t)\in\mc B$, $\alpha=1,\dots,\ell$.
\end{example}

Let us denote by $\Der(\mc A)$
the Lie algebra of all derivations of $\mc A$, and by $\Der^{\partial}(\mc A)=\{D\in\Der(\mc A)\mid [D,\partial]=0\}$
the centralizer of $\partial$ in $\Der(\mc A)$. An element $D\in\Der^{\partial}(\mc A)$ is called \emph{admissible} (it is usually called an \emph{evolutionary vector field} in the theory of integrable systems).
For $D\in\Der^{\partial}(\mc A)$ we have that
$$
D(u_{\alpha,m})=D(\partial^mu_\alpha)=\partial^m D(u_\alpha)\,,
$$
for every $\alpha=1,\dots\ell$ and $m\geq0$. Hence, by the Leibniz rule, the action of $D$ is completely determined by its values $D(u_\alpha)=W_\alpha\in\mc A$,
$\alpha=1,\dots,\ell$.
We thus have a vector space isomorphism
\begin{equation}\label{isoder}
\mc A^\ell\ni W=(W_1,\dots,W_\ell)\mapsto D_W
:=\sum_{\alpha=1}^\ell\sum_{m\geq0}\partial^m(W_\alpha)\frac{\partial}{\partial u_{\alpha,m}}\in\Der^{\partial}(\mc A)\,.
\end{equation}
In such a one-to-one correspondence, the $\ell$-ple $(u_{1,1},\dots,u_{\ell,1})$ is sent to~$\p$.

By definition, an \emph{evolution equation} on $\mc A$ has the form
\begin{equation}\label{eq:evolution}
\frac{\partial u_\alpha}{\partial t}=D(u_{\alpha})\,,\quad\alpha=1,\dots,\ell
\,,
\end{equation}
where $D$ is an admissible derivation of $\mc A$. 
Given a collection of infinitely many admissible derivations $D_j$, $j\in E$, on $\mc A$, the corresponding \emph{hierarchy of evolution equations} is the following system of evolution equations
\beq\label{pd}
\frac{\p u_\alpha}{\p t_j} \= D_j (u_\alpha)
 \,, \quad \alpha=1,\dots,\ell\,,j \in E\,.
\eeq
A \emph{solution} to the hierarchy of evolution equations \eqref{pd} is a collection of functions $u_\alpha(x,\bm t)\in\mc B$, $\alpha=1,\dots,\ell$, such that
$$
\frac{\partial u_\alpha(x,\bm t)}{\partial t_j}=D_j(u_\alpha)|_{u_{\alpha,m}\mapsto\p_x^m(u_\alpha(x,\bt)),\, m \geq 0}\,,
\quad
\alpha=1,\dots,\ell\,, j\in E\,.
$$
\begin{defi} \label{def:admissible}
A family $\{D_j\}_{j\in E}\subset\Der^{\partial}(\mc A)$ of linearly independent 
admissible derivations is called {\it integrable} if 
\begin{equation}\label{integrable}
[D_i,D_j]\=0\,,\quad \forall\,i,j\in E\,.
\end{equation}
We call the system \eqref{pd} associated to an integrable family of admissible derivations an \emph{integrable hierarchy} of evolution partial differential equations (PDEs).
\end{defi}
Note that \eqref{integrable} implies the compatibility condition \eqref{20170721:eq2b} for an integrable hierarchy.

\begin{remark}
Let $\mc B=V[[t_j\mid j\in E]]$ be as in Example \ref{example:DYZ}.
If the family $D_j$, $j\in E$, is integrable, then
for every $u(x)=(u_1(x),\dots, u_\ell(x))\in V^\ell$
there exists a solution $u(x,\bt)=(u_1(x,\bt),\dots,u_l(x,\bt))\in\mc B^\ell$ to the integrable hierarchy \eqref{pd} such that
$u(x,\bm t)|_{\bm t=0}=u(x)$.
\end{remark}

\section{Tau-structure associated to a family of derivations}\label{section2}

Let $\mc O_{\bm u}$ be an algebra of ordinary functions in the variables $u_1,\dots, u_\ell$, and let
$\mc A=\mc A(\mc O_{\bm u})$ be the differential algebra as in \eqref{diffpoly}. Consider the degree on $\mc A$ defined by
\begin{equation}\label{eq:degree}
\deg f=0\,,\quad f\in\mc O_{\bm u}\,,\quad \deg u_{\alpha,m}=m\,,\quad \alpha=1,\dots,\ell\,,m\geq1\,.
\end{equation}
We denote by~$\widehat{\mathcal{A}}\subset\mc A[[\epsilon]]$ the completion of $\mc A$ made through the gradation defined in \eqref{eq:degree}.
In other words, an element 
$a\in\widehat{\mathcal{A}}$ is an infinite series 
\begin{equation}\label{20230116:eq1}
a \= a^{[0]} \+ \epsilon a^{[1]} \+ \epsilon^2 a^{[2]}\+\cdots\,, \qquad a^{[q]}\in\mathcal{A}\,, ~ \deg a^{[q]} \= q \,.
\end{equation}
See~\cite{DZ-norm} for more details about 
$\widehat{\mc A}$.

The derivation $\partial$ in \eqref{partial} extends to a derivation $\partial:\widehat{\mathcal{A}}\rightarrow \widehat{\mathcal{A}}$ by letting $\partial(\epsilon)=0$.
Let us denote by $\Der(\widehat{\mc A})$
the Lie algebra of all derivations of $\widehat{\mc A}$ acting trivially on $\epsilon$. Clearly,
$\partial\in\Der(\widehat{\mc A})$, and we denote by
$\Der^{\partial}(\widehat{\mc A})=\{D\in\Der(\widehat{\mc A})\mid [D,\partial]=0\}$
the centralizer of $\partial$ in $\Der(\widehat{\mc A})$.
Then,
we have the vector space isomorphism
$\Der^\partial(\widehat{\mc A})\cong \widehat{\mc A}^\ell$, cf. \eqref{isoder}. Moreover, Definition \ref{def:admissible}
for an integrable family of admissible derivations extends from $\mc A$ to $\widehat{\mc A}$.

Let $D_j\in\Der^{\partial}(\widehat{\mc A})$, $j\in E$, be an integrable family of admissible derivations.
Then, we have the corresponding integrable hierarchy of evolution equations \eqref{pd} on $\widehat{\mc A}$
and we can consider the solution $u_\alpha(x,\bm t;\epsilon)\in\mc B[[\epsilon]]$, $\alpha=1,\dots,\ell$, like we did in Section~\ref{sec:2.2}.
\begin{remark}
We note that the system of PDEs \eqref{pd} corresponding to $D_j\in\Der^{\partial}(\widehat{\mc A})$ is now defined over $\widehat{\mc A}$ which is a completion of $\mc A$. The evaluation map \eqref{20170718:eq9b} extends to a differential algebra homomorphism $\widehat{\mc A}\to\mc B[[\epsilon]]$ in a natural way. When we talk about solution to the hierarchy \eqref{pd} defined over $\widehat{\mc A}$,
we mean that we can solve it on each homogeneous component, namely for each powers of $\epsilon$. Hence the solution lives in $\mc B[[\epsilon]]$.  
\end{remark}
\begin{defi}\label{def:tau}
The family $D_j\in\Der^{\partial}(\widehat{\mc A})$, $j\in E$, is called {\it admitting a tau-structure}
if there exist $\Omega_{i;j}\in \widehat{\mathcal{A}}$, $i,j\in E$,
satisfying $\partial \Omega_{i;j}\neq0$ for at least a pair $(i,j)$, $i,j\in E$, 
and such that for every $i,j,k\in E$,
\begin{align}
& \Omega_{i;j} \= \Omega_{j;i} \,, \label{tau1}\\
& D_i \, \bigl(\Omega_{j;k})
\= D_k \bigl(\Omega_{i;j}\bigr) \, . \label{tau2}
\end{align}
\end{defi}

\begin{remark}
Sometimes,  the word 
``tau-structure" is only specialized to tau-symmetry. Indeed,  
in literature, a bi-Hamiltonian tau-symmetric structure~\cite{DZ-norm} 
or a Hamiltonian tau-symmetric structure~\cite{DLYZ1} is referred to as 
a tau-structure;
the terminology tau-structure used in this paper means the 
underlying~$(\Omega_{i;j})_{i,j\in E}$.
\end{remark}
\begin{remark}
It was suggested by Boris Dubrovin that Definition \ref{def:tau} can be generalized to other rings of the jet variables $u_{\alpha,m}$ different from the ring
$\widehat{\mc A}$. In particular, he suggested to investigate the tau-structure for hierarchies of
Krichever--Novikov type~\cite{KN}. To this aim one should generalize the previous construction in the more general context
of algebras of differential functions \cite{BDSK}.
\end{remark}
Let $D_j\in\Der^{\partial}(\widehat{\mc A})$, $j\in E$, be an integrable family of admissible derivations admitting a tau-structure $\Omega_{i,j}\in\widehat{\mc A}$, $i,j\in E$. By the integrability assumption on $\mc B$ and equations \eqref{tau1}-\eqref{tau2}, for an arbitrary solution $u(x,\bt;\epsilon)\in\mc B[[\epsilon]]^\ell$ to the integrable hierarchy~\eqref{pd} there exists $\tau=\tau(x,\bt;\epsilon)$ (possibly in some algebra extension of $\mc B((\epsilon))$) satisfying 
\beq\label{eq:tau}
\epsilon^2
\frac{\p^2\log \tau(x,\bt;\epsilon)}{\p t_i \p t_j} 
\= \Omega_{i;j}|_{u_{\alpha,m}\mapsto\p_x^m(u_\alpha(x,\bt;\epsilon)),\, m \geq 0}\,.
\qquad
i,j\in E\,,
\eeq
We call $\Omega_{i;j}|_{u_{\alpha,m}\mapsto\p_x^m(u_\alpha(x,\bt;\epsilon))}\in\mc B[[\epsilon]]$, $i,j\in E$, the 
\emph{two-point correlation functions} of the solution~$u(x,\bt;\epsilon)$. We call $\tau(x,\bt;\epsilon)$ the \emph{tau-function} 
of the solution~$u(x,\bt;\epsilon)$, although it is uniquely determined by~$u(x,\bt;\epsilon)$ up to 
multiplying by a factor of the form
\[ \exp \Bigl(b(x;\epsilon)+\sum_{j\in E} a_j(x;\epsilon) t_j\Bigr) \,,\]
where $b(x;\epsilon), a_j(x;\epsilon)\in V[[\epsilon]]$. See \cite{DYZ} for further details.

\section{Miura-type transformations and tau-coordinates}\label{section3}

In this section we let $\mc O_{\bm u}$ be an algebra of ordinary functions in the variables $u_1,\dots, u_\ell$, and we denote by
$\mc A_{\bm u}=\mc A(\mc O_{\bm u})$ the differential algebra in \eqref{diffpoly} (we emphasize the dependence of $\bm u$ in the notation since later we will use other independent variables),
and by $\widehat{\mc A}_{\bm u}$ its completion defined in Section \ref{section2}.
We will also use the notation
$$
\bm u_m=(u_{1,m},\dots, u_{\ell,m})\,,
\qquad
m\geq0
\,.
$$
Let us start with recalling the following definition from \cite{DZ-norm}.
\begin{defi}
An $\ell$-ple 
\[\bigl(V_1(\bm u,\bm u_{1},\bm u_{2},\dots;\epsilon),\dots,V_\ell(\bm u,\bm u_{1},\bm u_{2},\dots;\epsilon)\bigr)
\in \widehat{\mathcal{A}}_{\bm u}^\ell
\] 
is called {\it of Miura-type} if 
\beq\label{Jacobiannondegen}
\det \, \biggl(\frac{\p V_{\alpha}^{[0]}}{\p u_\beta}\biggr) \; \neq \; 0 \,.
\eeq
(In \eqref{Jacobiannondegen} we are expanding $V_{\alpha}$ as in \eqref{20230116:eq1}.)
\end{defi}
Let $\widetilde{\mc O}_{\bm v}$ be an algebra of ordinary functions in the variables $\bm v=(v_1,\dots, v_\ell)$,
and let  $\bigl(\mathcal{J}_{\bm v}=\mc A(\widetilde{\mc O}_{\bm v}),\tilde \partial\bigr)$ 
be the differential algebra in \eqref{diffpoly},
where $\tilde \partial=\sum_{\alpha=1}^\ell\sum_{m\geq 0} v_{\alpha,m+1} \frac{\p}{\p v_{\alpha,m}}$.
Let $\bm V:=(V_\alpha(\bm u,\bm u_{1},\bm u_{2},\dots;\epsilon))_{\alpha=1,\dots,\ell}\in\widehat{\mathcal{A}}_{\bm u}^{\ell}$
be a Miura-type $\ell$-ple. We associate with 
this Miura-type $\ell$-ple a differential algebra homomorphism $\phi_{\bm V}: \widehat{\mathcal{J}}_{\bm v}
\rightarrow\widehat{\mathcal{A}}_{\bm u}$, defined
by assigning  on generators
\beq\label{miuravu}
v_{\alpha}\in\widehat{\mathcal{J}}_{\bm v}  \quad \mapsto \quad
\phi_{\bm V}(v_{\alpha})=V_{\alpha}(\bm u,\bm u_{1},\bm u_{2},\dots;\epsilon) \in\widehat{\mathcal{A}}_{\bm u} \,,
\eeq
and by extending it using the Leibniz rule. This homomorphism is called a \emph{Miura-type map}.
Recall from \cite{DZ-norm} that for every $\alpha\in\{1,\dots,\ell\}$ there exists a unique element 
$U_\alpha(\bm v,\bm v_{1},\bm v_{2},\dots;\epsilon)\in\widehat{\mathcal{J}}_{\bm v}$ such that 
\beq\label{Uu}
\phi_{\bm V}(U_\alpha(\bm v,\bm v_{1},\bm v_{2},\dots;\epsilon))
\=u_\alpha \,.
\eeq
The element  $\bm U=(U_\alpha(\bm v,\bm v_1,\bm v_2,\dots;\epsilon))_{\alpha=1,\dots,\ell}\in\widehat{\mathcal{J}}_{\bm v}^{\ell}$
gives a differential algebra homomorphism $\psi_{\bm U}: \widehat{\mathcal{A}}_{\bm u}\rightarrow \widehat{\mathcal{J}}_{\bm v}$ by assigning on generators
\[
u_{\alpha}\in\widehat{\mathcal{A}}_{\bm u}  \quad \mapsto \quad
\psi_{\bm U}(u_{\alpha})=U_{\alpha}(\bm v,\bm v_{1},\bm v_{2},\dots;\epsilon)\in\widehat{\mathcal{J}}_{\bm v} \,,
\]
and by extending it using the Leibniz rule. 
One can verify using \eqref{miuravu} and \eqref{Uu} that 
$\phi_{\bm V}\circ \psi_{\bm U}={\rm id}_{\widehat{\mc A}_{\bm u}}$ and
$\partial=\phi_{\bm V}\circ\tilde\partial\circ \psi_{\bm U}$.
Similarly, one can show that $\psi_{\bm U}\circ \phi_{\bm V} = {\rm id}_{\widehat{\mc J}_{\bm v}}$
and $\tilde \partial=\psi_{\bm U}\circ\partial\circ \phi_{\bm V}$.
 We call $\psi_{\bm U}$ the inverse Miura-type map of~$\phi_{\bm V}$. For more details of Miura-type and inverse Miura-type maps see~\cite{DZ-norm}.

Let~$D$ be an admissible derivation on~$\widehat{\mathcal{A}}_{\bm u}$. It induces 
an admissible derivation on~$\widehat{\mathcal{J}}_{\bm v}$, denoted by~$\widetilde{D}$, 
via specifying 
\beq\label{Dtilde}
\widetilde{D}(v_\alpha) \:= \psi_{\bm U}\circ D\circ\phi_{\bm V}(v_\alpha)
\,,
\quad \alpha=1,\dots,\ell\,,
\eeq
and extending by the Leibinz rule.
\begin{theorem}\label{tauint}
Let $E$ be an infinite set with a distinguished element~${\bm1}\in E$,  
and let $D_j$, $j\in E$, be a family of linearly independent admissible 
derivations on~$\widehat{\mathcal{A}}_{\bm u}$, 
such that $D_{\bm1}$ commutes with all admissible derivations on~$\widehat{\mathcal{A}}_{\bm u}$. 
Suppose that the family $D_j$, $j\in E$, admits a tau-structure~$\Omega_{i;j}\in\widehat{\mathcal A}_{\bm u}$, $i,j\in E$,
such that
\beq\label{mondegen}
\bm V:= (\Omega_{\bm 1;i_1},\dots,\Omega_{\bm 1;i_\ell}) \mbox{ is a 
Miura-type $\ell$-ple},
\eeq
for some distinct $i_1,\dots,i_\ell\in E$.
Then the following statements are true:
\begin{enumerate}[(i)]
\item the family $D_j$, $j\in E$, is integrable;
\item let $\phi_{\bm V}: \widehat{\mathcal{J}}_{\bm v}\rightarrow\widehat{\mathcal{A}}_{\bm u}$ be the Miura-type map 
generated by~\eqref{miuravu},
and let~$\bm U=(U_\alpha)_{\alpha=1,\dots,\ell}$, be 
the unique element in~$\widehat{\mathcal{A}}_{\bm u}^\ell$ satisfying~\eqref{Uu}.
Then we have
$$
D_j(u_\alpha)  \= \sum_{\substack{\beta=1,\dots,\ell\\m\geq 0}}
\phi_{\bm V}\biggl(\frac{\p U_\alpha}{\p v_{\beta,m} } \biggr)
\partial^m(D_{\bm1}(\Omega_{j;i_\beta})) \,,
\quad j\in E\,,\alpha=1,\dots,\ell\,.
$$
\end{enumerate}
\end{theorem}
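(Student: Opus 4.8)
The plan is to reduce everything to a single elementary identity satisfied by the tau-structure and then to exploit the non-degeneracy condition \eqref{mondegen} in order to pass to the ``tau-coordinates'' $\bm v$, in which the flows commute almost by inspection. The starting point is the \emph{key identity}
\[
D_j(\Omega_{\bm 1;k}) = D_{\bm 1}(\Omega_{j;k}), \qquad j,k\in E,
\]
which I would establish directly from the two tau-structure axioms: tau-symmetry \eqref{tau2} applied once as $D_j(\Omega_{\bm 1;k})=D_k(\Omega_{j;\bm 1})$ and once as $D_{\bm 1}(\Omega_{j;k})=D_k(\Omega_{\bm 1;j})$, followed by the symmetry \eqref{tau1} in the form $\Omega_{j;\bm 1}=\Omega_{\bm 1;j}$. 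I note that this step uses neither the centrality of $D_{\bm 1}$ nor \eqref{mondegen}.

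Next, since \eqref{mondegen} asserts that $\bm V=(\Omega_{\bm 1;i_1},\dots,\Omega_{\bm 1;i_\ell})$ is a Miura-type $\ell$-ple, the associated map $\phi_{\bm V}\colon\widehat{\mc J}_{\bm v}\to\widehat{\mc A}_{\bm u}$ is invertible with inverse $\psi_{\bm U}$, and I transport the whole picture to $\widehat{\mc J}_{\bm v}$. Writing $\widetilde{D}_j=\psi_{\bm U}\circ D_j\circ\phi_{\bm V}$ and $\widetilde{\Omega}_{i;j}=\psi_{\bm U}(\Omega_{i;j})$, and using that conjugation by an isomorphism preserves brackets together with $\tilde\partial=\psi_{\bm U}\circ\partial\circ\phi_{\bm V}$, one checks that the $\widetilde{D}_j$ are again admissible, that they satisfy the transported axioms $\widetilde{\Omega}_{i;j}=\widetilde{\Omega}_{j;i}$ and $\widetilde{D}_i(\widetilde{\Omega}_{j;k})=\widetilde{D}_k(\widetilde{\Omega}_{i;j})$, and that $\widetilde{D}_{\bm 1}$ commutes with every $\widetilde{D}_j$. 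The crucial gain is that $\widetilde{\Omega}_{\bm 1;i_\alpha}=\psi_{\bm U}(\Omega_{\bm 1;i_\alpha})=\psi_{\bm U}(V_\alpha)=v_\alpha$; that is, the distinguished tau-densities become the coordinates.

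For part (i) it then suffices to check $[\widetilde{D}_i,\widetilde{D}_j](v_\alpha)=0$ for each $\alpha$, since an admissible derivation is determined by its values on the generators. Using the transported key identity $\widetilde{D}_j(v_\alpha)=\widetilde{D}_j(\widetilde{\Omega}_{\bm 1;i_\alpha})=\widetilde{D}_{\bm 1}(\widetilde{\Omega}_{j;i_\alpha})$, then the centrality of $\widetilde{D}_{\bm 1}$ to move it outside, and finally transported tau-symmetry, I compute $\widetilde{D}_i\widetilde{D}_j(v_\alpha)=\widetilde{D}_{\bm 1}\widetilde{D}_i(\widetilde{\Omega}_{j;i_\alpha})=\widetilde{D}_{\bm 1}\widetilde{D}_{i_\alpha}(\widetilde{\Omega}_{i;j})$. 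The right-hand side is symmetric in $i$ and $j$ by $\widetilde{\Omega}_{i;j}=\widetilde{\Omega}_{j;i}$, whence $[\widetilde{D}_i,\widetilde{D}_j](v_\alpha)=0$; pulling back by $\phi_{\bm V}$ gives $[D_i,D_j]=0$.

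For part (ii) I would differentiate the identity $u_\alpha=\phi_{\bm V}(U_\alpha)$ by $D_j$. Since $\phi_{\bm V}$ is a differential-algebra homomorphism intertwining $\tilde\partial$ and $\partial$, it sends $v_{\beta,m}\mapsto\partial^m V_\beta=\partial^m\Omega_{\bm 1;i_\beta}$, so the chain rule for differential polynomials gives $D_j(u_\alpha)=\sum_{\beta,m}\phi_{\bm V}(\partial U_\alpha/\partial v_{\beta,m})\,D_j(\partial^m V_\beta)$; commuting $D_j$ past $\partial^m$ and invoking the key identity $D_j(\Omega_{\bm 1;i_\beta})=D_{\bm 1}(\Omega_{j;i_\beta})$ then yields exactly the claimed formula. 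I expect the only real bookkeeping here---and the main technical obstacle overall---to be justifying these manipulations inside the completion $\widehat{\mc A}_{\bm u}$: one must verify that the chain-rule sum, the inverse Miura element $\bm U$, and the transported objects are all well defined degree-by-degree in the $\epsilon$-grading, which is precisely where the completeness of $\widehat{\mc A}_{\bm u}$ and the invertibility of the leading Jacobian \eqref{Jacobiannondegen} are genuinely used. Conceptually, however, once the tau-coordinates are in place the commutativity is forced by symmetry together with tau-symmetry, so the heart of the matter is the short computation carried out for part (i).
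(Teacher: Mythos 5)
Your proposal is correct and takes essentially the same route as the paper's own proof: the same key identity $D_j(\Omega_{\bm 1;k})=D_{\bm 1}(\Omega_{j;k})$ obtained from \eqref{tau1}--\eqref{tau2}, the same symmetry computation $D_iD_j(\Omega_{\bm 1;i_\alpha})=D_{\bm 1}D_{i_\alpha}(\Omega_{i;j})$ using the centrality of $D_{\bm 1}$, the same passage through the Miura pair $(\phi_{\bm V},\psi_{\bm U})$ to conclude that the bracket of admissible derivations vanishes, and the same chain-rule computation for part (ii). The only difference is organizational: you transport the tau-structure to $\widehat{\mathcal{J}}_{\bm v}$ first and compute there (exploiting $\psi_{\bm U}(\Omega_{\bm 1;i_\alpha})=v_\alpha$), whereas the paper performs the symmetry computation in $\widehat{\mathcal{A}}_{\bm u}$ and applies $\psi_{\bm U}$ only at the end.
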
 
\begin{proof}
For every $i,j\in E$ and $\alpha=1,\dots,\ell$, using \eqref{tau1}, \eqref{tau2} and the assumption that $D_{\bm 1}$ commutes
with all admissible derivations, we have
$$
D_i D_j \bigl(\Omega_{\bm1,i_\alpha}\bigr) \= D_i D_{\bm1} \bigl(\Omega_{j;i_\alpha}\bigr)
\= D_{\bm 1} D_i \bigl(\Omega_{j;i_\alpha}\bigr) \= D_{\bm 1} D_{i_\alpha}  \bigl(\Omega_{i;j}\bigr)
\,,
$$ 
which is symmetric in exchanging $i$ and $j$, namely
\begin{equation}\label{20230306:eq1}
D_i D_j \bigl(\Omega_{\bm1,i_\alpha}\bigr)
 \= D_j D_i \bigl(\Omega_{\bm 1;i_\alpha}\bigr)
\,.
\end{equation}
Denote by~$\widetilde{D}_j$ the derivations on~$\widehat{\mathcal{J}}_{\bm v}$ defined using \eqref{Dtilde}. For every
$\alpha=1,\dots,\ell$, we have
$$
\widetilde{D}_j(v_\alpha)=\psi_{\bm U}\circ D_j(\Omega_{\bm 1,j_\alpha})
\,.
$$
Using the fact that
$\psi_{\bm U}$ and $\phi_{\bm V}$ are inverse maps to each other and equation \eqref{20230306:eq1} we get
$$
[\widetilde{D}_i,\widetilde{D}_j](v_\alpha)=\psi_{\bm U}\circ [D_i,D_j](\Omega_{\bm 1,j_\alpha})=0\,,
\quad
\alpha=1,\dots,\ell\,.
$$
Since $\bigl[\widetilde{D}_i, \widetilde{D}_j\bigr]$ is a derivation commuting with $\tilde \partial$, we conclude that $\bigl[\widetilde{D}_i, \widetilde{D}_j\bigr]=0$. Therefore also $[ D_i, D_j ]=0$. 
Finally, we have
\begin{align}
 D_j (u_\alpha)&= \phi_{\bm V}\circ \widetilde D_j \circ \psi_{\bm U} (u_\alpha)
 =  \phi_{\bm V}\circ \widetilde D_j (U_\alpha)
 =\phi_{\bm V}\biggl(\sum_{\substack{\beta=1,\dots,\ell\\m\geq 0}}  \frac{\p U_\alpha}{\p v_{\beta,m} }  \widetilde D_j (v_{\beta,m}) \biggr) \nn\\
& =  \sum_{\substack{\beta=1,\dots,\ell\\m\geq 0}}  \phi_{\bm V} \biggl(\frac{\p U_\alpha}{\p v_{\beta,m} }\biggr) \, \phi_{\bm V} \bigl( \tilde{\partial}^m \left( D_j (v_{\beta})\right)\bigr)  
\nn\\
&=  \sum_{m\geq 0}  \phi_{\bm V} \biggl(\frac{\p U_\alpha}{\p v_{\beta,m} }\biggr) \, 
(\phi_{\bm V} \circ \tilde{\partial}^m \circ \psi_{\bm U})  \left(D_j (\Omega_{\bm 1;i_\beta}) \right)\nn\\
& =  \sum_{\substack{\beta=1,\dots,\ell\\m\geq 0}}  \phi_{\bm V} \biggl(\frac{\p U_\alpha}{\p v_{\beta,m} }\biggr) \, 
\partial^m(D_{\bm 1}(\Omega_{j;i_\beta})) \,.\nn
\end{align} 
The theorem is proved.
\end{proof}
\noindent We call~$\bm v=(v_1,\dots,v_{\ell})$ 
in the above theorem a choice of {\it tau-coordinates}.

\begin{remark}
Statement~(i) of Theorem~\ref{tauint} provides a way of proving integrability of the hierarchy of evolution equations
\eqref{pd} corresponding to the family $D_j$, $j\in E$. This 
principle has been used successfully in~\cite{DYZ}.
Statement (ii) provides a way of getting explicit expressions for the integrable family of admissible derivations $D_j$, $j\in E$, starting from the tau-structure $\Omega_{i,j}$,
$i,j\in E$.
\end{remark}

\section{Example: Drinfeld--Sokolov hierarchy}\label{section4}
In this Section we give an application of Theorem \ref{tauint}. We review the construction of the tau-structure of the DS hierarchy in Section \ref{sec:5.2}
and we use it to construct a choice of tau-coordinates. 
Using the material in Section \ref{sec:5.4} we finally prove the integrability of DS hierarchies in Section \ref{sec:int}.

\subsection{Preliminaries}
The material of this subsection  can be found in textbooks on Kac-Moody algebras, see for example~\cite{Kac94}.

\subsubsection{Principal and standard gradations of an affine Kac--Moody algebra}

Let $X_n^{(r)}$ be an affine Kac-Moody algebra of rank~$\ell$ 
with $r=1,2,3$, with Chevalley generators 
$\{e_i\,,h_i\,,f_i\mid i=0,\dots,\ell\}$, and let $a_i$ (respectively $a_i^\vee$) be the Kac labels 
for the associated Dynkin diagram.
Denote by $h, h^\vee$ the 
 \emph{Coxeter number} and the \emph{dual Coxeter number} of $X_{n}^{(r)}$.

Let $\widetilde{\mf g}$ be the quotient of $X_{n}^{(r)}$ by the one dimensional space generated by the
central element
$K=\sum_{i=0}^\ell a_i^\vee h_i$.
The \emph{principal gradation} on $\widetilde{\mf g}$ is defined by
assigning
$$\deg^{\rm pr} e_i=-\deg^{\rm pr} f_i=1\,,\quad i=0,\dots,\ell\,.$$
Let $\widetilde{\mf g}^{\,k}$ be
the subspace of homogeneous elements of principal degree $k$.
We will work with the following completion of $\widetilde{\mf g}$, which we denote with the same symbol by an abuse of notation,
\begin{equation}\label{deg:principal}
\widetilde{\mf g}\=\wbigoplus_{k\in\mb Z}\,\widetilde{\mf g}^{k}
\,,
\end{equation}
where the direct sum is completed by allowing infinite series in negative degree.
Given an element $a\in\widetilde{\mf g}$ we denote by $a^+$ its projection on $\widetilde{\mf g}^{\geq0}=\oplus_{k\geq0}\widetilde{\mf g}^k$ and 
by $a^{-}$ its projection on $\widetilde{\mf g}^{<0}=\widehat{\oplus}_{k<0}\widetilde{\mf g}^k$.

Let $m$ be an integer from $0$ to $\ell$. 
Take $c_m$ to be the $m$th vertex of the Dynkin diagram of~$X_{n}^{(r)}$.
The \emph{standard gradation} corresponding to
$c_m$ is defined by assigning
$$\deg_{\rm st} e_i=-\deg_{\rm st} f_i=\delta_{i,m}\,,\quad i=0,\dots,\ell\,.$$
Then, we also have the direct sum decomposition
\begin{equation}\label{deg:standard}
\widetilde{\mf g}\=\wbigoplus_{k\in\mb Z}\,\widetilde{\mf g}_{k}
\,,
\end{equation}
where $\widetilde{\mf g}_k$ denotes the homogeneous subspace of elements with standard degree $k$.
Given an element $a\in\widetilde{\mf g}$ we denote by $a_+$ its projection on $\widetilde{\mf g}_{\geq0}=\oplus_{k\geq0}\widetilde{\mf g}_k$ and 
by $a_{-}$ its projection on $\widetilde{\mf g}_{<0}=\widehat{\oplus}_{k<0}\widetilde{\mf g}_k$.

\subsubsection{Loop realization of~$X_n^{(r)}$}
Let $\mf g$ be a simple finite dimensional Lie algebra of rank~$n$, and let $\sigma$ be an automorphism of
$\mf g$ satisfying $\sigma^N=1$
for a positive integer~$N$. 
Since $\sigma$ is diagonalizable, we have the direct sum decomposition
$$
\mf g\=\bigoplus_{\bar k\in\mb Z/N\mb Z}\mf g_{\bar k}
\,,
$$
where $\mf g_{\bar{k}}$ is the eigenspace of $\sigma$ with eigenvalue $e^{\frac{2\pi i k}{N}}$.

Denote by $L(\mf g)=\mf g\otimes\mb C((\lambda^{-1}))$ the space of Laurent series in the variable $\lambda^{-1}$
with coefficients in~$\mf g$.
The Lie algebra structure of $\mf g$ extends naturally to a Lie algebra structure on~$L(\mf g)$.
We extend $\sigma$ to a Lie algebra homomorphism $\sigma:L(\mf{g})\to L(\mf g)$ as follows
$$
\sigma(a\otimes f(\lambda))\=\sigma(a)\otimes f(e^{-\frac{2\pi i}{N}}\lambda),
$$
for $a\in \mf g$, $f\in \mb C((\lambda^{-1}))$.
The subalgebra of invariant elements with respect to $\sigma$ is the twisted algebra of Laurent series in the variable $\lambda^{-1}$
with coefficients in $\mf g$, and we denote it by
$$
L(\mf g,\sigma)\=\left\{a\in L(\mf{g})\mid \sigma(a)=a\right\}
\,.
$$
For every vertex $c_m$ of the Dynkin diagram of $X_n^{(r)}$ there exists a simple Lie algebra~$\g$ 
(independent of~$m$) and an automorphism $\sigma_m$
of $\mf g$
of order $N_m=ra_m$, such that $\widetilde{\mf g}\cong L(\mf g,\sigma_m)$, 
and the gradation of $L(\mf g,\sigma_m)$ in powers of~$\lambda$ 
is the standard gradation of~$\tilde\g$. 
We call $L(\mf g,\sigma_m)$ the \emph{standard realization} of $\widetilde{\mf g}$ corresponding to the vertex $c_m$,
and we denote
\begin{equation}\label{dec:sigmatwisted}
L(\mf g,\sigma_m)
\=\wbigoplus_{k\in \mb Z}\, L(\mf g,\sigma_m)_k
\end{equation}
with $L(\mf g,\sigma_m)_k:=\mf g_{\bar k}\otimes \lambda^k$ (cf.~equation~\eqref{deg:standard}).

Recall that $\mf a:=L(\mf g,\sigma_m)_0=\mf g_{\bar0}$ is a semisimple Lie algebra. 
Let us denote by 
\begin{equation}\label{eq:e2022}
e\=\sum_{i\neq m}^\ell e_i\in \mf a \subset \widetilde{\mf g}
\end{equation}
a principal nilpotent element of~$\mf a$. 
By the Jacobson--Morozov Theorem there exist a nilpotent element~$f$ and 
a semisimple element~$\rho^\vee$ in $\mf a$ such that
$$
[\rho^\vee,e]\=e\,,
\qquad
[\rho^\vee,f]\=-f
\,,\qquad
[e,f]\=\rho^\vee
\,.
$$
Note that $\rho^\vee$ is also a semisimple element of~$\g$, and 
 the principal gradation \eqref{deg:principal} on $\widetilde{\mf g} \cong L(\g, \sigma_m)$ is defined by the 
following linear map
$$
\ad\rho^\vee+\frac{rh}{N_m}\lambda\frac{d}{d\lambda}:
L(\g, \sigma_m) \to L(\g, \sigma_m)
\,.
$$
As in~\eqref{deg:principal} we write
$$
L(\mf g,\sigma_m)=\wbigoplus_{k\in\mb Z}\, L(\mf g,\sigma_m)^k
\,,
$$
where elements in $\widetilde{\mf g}^k\cong L(\mf g,\sigma)^k$ have principal degree~$k$.

Let $(\cdot\,|\,\cdot)$ be the normalized invariant bilinear form on $\mf g$.
We extend it to a bilinear form on $L(\mf g)$ with values in $\mb C((\lambda^{-1}))$ by
$$
(a\otimes f(\lambda)|b\otimes g(\lambda))\=(a|b)f(\lambda)g(\lambda)\,,
\qquad
a,b\in\mf g\,,
f(\lambda),g(\lambda)\in\mb C((\lambda^{-1}))
\,.
$$
In the sequel we will consider the restriction of this $\mb C((\lambda^{-1}))$-valued bilinear form to $\widetilde{\mf g}\subset L(\g)$.

As in~\cite{DVY} let us introduce the linear map 
$\pi_\lambda:\mb C((\lambda^{-1}))\to \lambda^{-1}\mb C[[\lambda^{-N_m}]]$ defined via 
$$
\lambda^k\mapsto
\left\{\begin{array}{ll}
\lambda^k\,, & \text{if }k\equiv -1 \, (\bmod N_m), ~ k<0\,,
\\
0\,, & \text{otherwise}
\,,
\end{array}
\right.
$$
where  $k\in\mb Z$. Similarly, we define
$\pi_{\lambda,\mu}=\pi_\lambda\circ\pi_\mu$
and $\pi_{\lambda,\mu,\eta}=\pi_\lambda\circ\pi_\mu\circ\pi_\eta$
(see \cite{DVY} for details).

\subsubsection{Principal Heisenberg subalgebra}
Introduce the {\it cyclic element}
\begin{equation}\label{eq:Lambda}
\Lambda(\lambda) \= \sum_{i=0}^\ell e_i \= e \+ e_m(\lambda) \in\widetilde{\mf g} \,,
\end{equation}
where $e$ is defined in~\eqref{eq:e2022}. By~\eqref{dec:sigmatwisted}
we have that $e_m(\lambda)=\tilde e_m\lambda$, for some $\tilde e_m\in\mf g_{\bar1}$.
Note that $\Lambda$ is homogeneous of principal degree~$1$.
Let $\mc H=\Ker \, \ad \, \Lambda(\lambda)$ be the so-called \emph{principal (centerless) Heisenberg subalgebra} of $\widetilde{\mf g}$.
Recall that, $\mc H$ is abelian
and we have the direct sum decomposition
\begin{equation}\label{dec:Lambda}
\widetilde{\mf g}\=\mc H \,\oplus \, \im\,\ad\,\Lambda(\lambda)
\,.
\end{equation}
Given $A\in \widetilde{\mf g}$ we denote by $\pi_{\mc H}(A)\in\mc H$ its projection with respect to the direct sum
decomposition \eqref{dec:Lambda}. 

It is known that 
$\mc H$ admits the following decomposition:
$$
\mc H\= 
\wbigoplus_{i\in E}\mb C \Lambda_i(\lambda)\,,
$$
where 
$E=\bigcup_{a=1}^n(m_a+rh\mb Z)$
is the set of exponents of $X_{n}^{(r)}$,
and $\Lambda_i(\lambda)\neq 0$, $i\in E$, have principal degree~$i$. 
Here,
we have that
$1=m_1<m_2\leq\dots\leq m_{n-1}<m_n=rh-1$
satisfy
$
m_{a}+m_{n+1-a}=rh
$,
$a=1,\dots,n,$
and we normalize the elements $\Lambda_i(\lambda)$, $i\in E$, as follows:
\begin{align}
& \Lambda_1(\lambda)\=\Lambda(\lambda)\notag\,, \\
& (\Lambda_{m_a}(\lambda)|\Lambda_{m_b}(\lambda))\=\delta_{a+b,n+1} h\lambda^{N_m}\,, \quad 1\leq a,b\leq n\,, \notag\\
& \Lambda_{m_a+rhk}(\lambda)\=\Lambda_{m_a}(\lambda)\lambda^{kN_m}\,, \quad k\in\mb Z\,,1\leq a,b\leq n\,. \label{eq:normalization1}
\end{align}

\subsection{Pre-DS derivations, basic resolvents and tau-structure}\label{sec:5.2}
Recall that $\mf a=\g_{\bar{0}}$ is a semi-simple Lie algebra.
With respect to the principal gradation we can write
\begin{equation}\label{dec:dynkin}
\mf a\=\bigoplus_{i=-h_{\mf a}+1}^{h_{\mf a}-1}\mf a^i
\,,
\end{equation}
where $\mf a^i=\mf a\cap\widetilde{\mf g}^{\,i}$ and $h_{\mf a}$ is the Coxeter number of $\mf a$.
Let us further denote by $\mf n$ the nilpotent subalgebra $\mf n=\mf a^{<0}$ of~$\mf a$ and by~$\mf b$ the 
Borel subalgebra $\mf b=\mf n\oplus\mf a^0$ of~$\mf a$. 
We note that we have the following commutation relations 
\begin{equation}\label{20171109:eq1}
[\mf n,\mf b]\subset\mf n
\,,
\qquad
[\mf n,e_m(\lambda)]\=0
\,,
\qquad
[\mf n,e_i]\subset\mf b
\,,
\qquad
i\in\{0,\dots,\ell\}\setminus\{m\}
\,,
\end{equation}
and that
$\mf a^0$ is the Cartan subalgebra of $\mf a$ generated by the Chevalley generators $h_i$, $0\leq i\leq \ell$, $i\neq m$. 

Let $v_1,\dots,v_{\dim \fb}$ be a basis of~$\fb$, and $p_1,\dots,p_{\dim \fn}$ be a basis of~$\fn$, both homogeneous with respect to 
the gradation \eqref{dec:dynkin}.
Following~\cite{DS}, let  
\begin{equation}\label{eq:lax}
\mathcal{L} \:= \p \+ \Lambda(\lambda) \+ q
\end{equation}
be the Lax operator of the pre-DS hierarchies.
Here, $q:=\sum_{i=1}^{\dim \fb} q_i v_i$, where $q_1,\dots,q_{\dim \fb}$ are indeterminates.

Let
$\mathcal{A}_{\bm q}=\mathcal{A}(\mathcal{O}_{\bm q})$ be a differential algebra in the variables
$\bm q=(q_1,\dots,q_{\dim \fb})$ introduced in \eqref{diffpoly}.

\begin{defi}[\cite{BDY, DVY}] \label{resol-defi}
The {\it basic resolvents}~$R_{m_a}(\lambda)\in  \mathcal{A}_{\bm q} \otimes  \g((\lambda^{-1})) $, $a=1,\dots,n$, associated with~$\L$ are defined as the unique elements satisfying:
\begin{align}
& [\L,R_{m_a}(\lambda)] \= 0\, , \label{basicdef1}\\
& R_{m_a}(\lambda) \= \Lambda_{m_a}(\lambda) \+ \mbox{ lower order terms w.r.t. } \deg^{\rm pr} \, , \label{basicdef2} \\
& \bigl(R_{m_a}(\lambda)\, | \,R_{m_b}(\lambda)\bigr) \= h \, \delta_{a+b,n+1} \, \lambda^{N_m} \, . \label{boundaryr}
\end{align}
\end{defi}
Note that Definition \ref{resol-defi} is stated differently, but equivalently, in \cite{DVY}. In fact, it is proven in \cite{DS}, that
the space of elements in $\mathcal{A}_{\bm q} \otimes  \g((\lambda^{-1})) $ commuting with $\mc L$ is
$
e^{-\ad U(\lambda)}(\mathcal H)\,,
$
where $U(\lambda)\in\mathcal{A}_{\bm q} \otimes (\im\ad\Lambda(\lambda))^{<0}$ is the unique element such that
\begin{equation}\label{eq:fundamental}
e^{\ad U(\lambda)}(\partial+q+\Lambda(\lambda))=\partial+\Lambda(\lambda)+H(\lambda)\,,
\end{equation}
for some $H(\lambda)\in\mathcal{A}_{\bm q}\otimes\mathcal H^{<0}$. Hence, by the properties of the exponential map
and the normalization \eqref{eq:normalization1}, $R_{m_a}(\lambda)=e^{-\ad U(\lambda)}(\Lambda_{m_a}(\lambda))$, $a=1,\dots,n$, is the unique solution
to \eqref{basicdef1}-\eqref{boundaryr}. 

Using the fact that $[\lambda^{kN_m} R_{m_a}(\lambda)\,,\, \L]=0$, $a=1,\dots, n$, $k\in\mb Z$,
it is further proven in \cite{DS} that 

\begin{equation}\label{DRL}
\bigl[\bigl(\lambda^{kN_m} R_{m_a}(\lambda)\bigr)_+ \,,\, \L\bigr] \in \mathcal{A}_q\otimes \fb
\,.
\end{equation}
Here,  $(\,)_+$ refers to taking the non-negative part with respect to the standard gradation \eqref{dec:sigmatwisted},
that is the polynomial part in~$\lambda$; similarly, 
$(\,)_-$ refers to taking the negative part with respect to the standard gradation, that is the terms having negative powers in~$\lambda$.

By using equation \eqref{DRL}, we can 
define a family of admissible derivations~$D_{a,k}^{\rm pre}$ on $\mathcal{A}_{\bm q}$, for $a=1,\dots,n$, $k\geq0$, via requiring
\beq\label{pre-DS}
 D_{a,k}^{\rm pre} (q) \=
\Bigl[\bigl(\lambda^{kN_m} R_{m_a}(\lambda)\bigr)_+ \,,\, \L\Bigr] \,.
\eeq
\begin{defi} [\cite{BDY, DVY}]\label{tau-symm-def}
Define a family of elements 
$\Omega_{a,k_1;b,k_2}\in{\mathcal A}_{\bm q}$, $a,b=1,\dots,n$, $k_1, k_2\geq 0$, by
\beq\label{tau-symm-eq}
\sum_{k_1,k_2\geq 0} \frac{\Omega_{a,k_1;b,k_2}}{\lambda^{k_1 N_m+1} \mu^{k_2 N_m+1}} 
\=  \pi_{\lambda,\mu} \Biggl(\frac{\bigl(R_{m_a} (\lambda) \, | \, R_{m_b} (\mu)\bigr)}{(\lambda-\mu)^2} \,-\,
\frac{\delta_{a+b,n+1}}r \frac{m_a  \lambda^{N_m} + m_b \mu^{N_m}}{(\lambda-\mu)^2}\Biggr).
\eeq
\end{defi}

It follows immediately from the defining equation~\eqref{tau-symm-eq} and the symmetry property of the bilinear form~$(\cdot|\cdot)$ that 
\beq\label{symm1}
\Omega_{a,k_1; b,k_2}\=\Omega_{b,k_2; a,k_1}\,, \quad \text{for all } a,b=1,\dots,n\text{ and }k_1,k_2\geq0\,.
\eeq

To proceed let us introduce the operators:
$$
\nabla_a(\lambda) \= 
\sum_{k\geq 0}  \frac{D_{a,k}^{\rm pre}}{\lambda^{N_m k+1}} \,, \quad a=1,\dots,n\,.
$$
It was proven in~\cite[Lemma~5]{DVY} that for $a,b=1\dots,n$, we have
\begin{equation}\label{20171104}
\nabla_a(\lambda)R_{m_b}(\mu)\=\pi_{\lambda}\frac{[R_{m_a}(\lambda),R_{m_b}(\mu)]}{\lambda-\mu}\,.
\end{equation}
Using~\eqref{tau-symm-eq} and~\eqref{20171104} 
it is also proven in \cite[Proposition~1]{DVY}
that
$$
\nabla_c(\xi)\, \Biggl(\sum_{k_1,k_2\geq 0}  
\frac{\Omega_{a,k_1;b,k_2}}{\lambda^{N_m k_1+1} \mu^{N_m k_2+1}} \Biggr) \= 
- \pi_{\lambda,\mu,\xi}\frac{\bigl([R_{m_c}(\xi),R_{m_a}(\lambda)] \, | \, R_{m_b}(\mu)\bigr)} 
{(\lambda-\mu)(\mu-\xi)(\xi-\lambda)} \,,
$$
for all $a,b,c=1,\dots,n$.
It follows immediately that 
\begin{align}
& D_{a,k_1} (\Omega_{b,k_2;c,k_3})\=  D_{c,k_3} (\Omega_{a,k_1;b,k_2})  \,,
\label{symm2}
\end{align}
for all $a,b,c=1,\dots,n$ and $k_1,k_2,k_3\geq0$. 

\subsection{Gauge invariants}\label{subsectiongaugeinv}
Let $(p_i)_{i=1,\dots,\dim \fn}$ be the basis of $\fn$ introduced in Section~\ref{sec:5.2}, and let~$S= \sum_{i=1}^{\dim \fn} S_i p_i$, where    
$S_1,\dots,S_{\dim\mf n}$ are indeterminates. Denote by 
$\mathcal{A}_{\bm q,\bm S}=\mc A(\mc O_{\bm q,\bm S})$ a differential algebra in the variables $\bm q=(q_1,\dots, q_{\dim\mf b})$ and
$\bm S=(S_1,\dots,S_{\dim \mf n})$ introduced in \eqref{diffpoly}, extending $\mc A_{\bm q}$.
 Define $Q=\sum_{i=1}^{\dim \fb} Q_i v_i\in\mathcal{A}_{\bm q,\bm S} \otimes \fb$ by 
\beq\label{eq:gauge}
e^{\ad S} {\mathcal L}
\= e^{\ad S}(\partial + \Lambda(\lambda)+q )\= \p \+ \Lambda(\lambda) \+ Q\,.
\eeq
Using the element~$Q$ we introduce a differential algebra homomorphism 
\[f:\mathcal{A}_{\bm q}\to \mathcal{A}_{\bm q,\bm S}\,,\] defined by assigning on generators
\[
q_{i} \; \mapsto \; Q_i \,, \quad i=1,\dots,\dim_{\fb}\,,
\]
and by extending it using the Leibniz rule.
The following definition is due to Drinfeld and Sokolov.
\begin{defi}[\cite{DS}]
An element $w \in \mathcal{A}_{\bm q}$ is called a {\it gauge invariant} if 
$$
f(w)  \=  w \,.
$$
\end{defi}
Denote by~$\mathcal{R}=\{w\in\mc{A}_{\bm q}\mid f(w)=w\}\subset\mathcal{A}_{\bm q}$ the set of all gauge invariants. 
Clearly, $\mc R$ is a differential subalgebra of $\mc A_{\bm q}$.
Following Drinfeld and Sokolov~\cite{DS}, we show that $\mathcal{R}$ is a freely 
{\it generated} differential algebra through an explicit construction of the generators.
Recall that 
a linear subspace $V\subset\fb$ is called a \textit{gauge of DS-type} if 
\beq\label{DSgaugeequal}
\fb \= V \oplus [e, \fn]\,.
\eeq
From now on we fix~$V\subset\fb$ to be a gauge of DS-type.
\begin{prop}[\cite{DS}]\label{NQ}
There exists a unique pair $(S_{\rm can},Q_{\rm can})$ with $S_{\rm can}\in\mc A_{\bm q,\bm S}\otimes\mf n$
and $Q_{\rm can}\in\mc A_{\bm q,\bm S}\otimes V$, such that 
\beq\label{canform}
e^{\ad {S_{\rm can}}} {\mathcal L} \= \p \+ \Lambda(\lambda) \+ Q_{\rm can} \,.
\eeq
Moreover, $S_{\rm can}\in\mc A_{\bm q}\otimes\mf n$
and $Q_{\rm can}\in\mc R\otimes V$.
\end{prop}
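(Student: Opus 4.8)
\emph{Strategy.} The plan is to recast the canonical-form condition \eqref{canform} as a triangular recursion with respect to the principal gradation \eqref{dec:dynkin} of $\mf a$ and to solve it by successive approximation. Write a candidate gauge parameter as $S=\sum_{i<0}S^{(i)}$, with $S^{(i)}\in\mf a^i$ of principal degree $i$ and coefficients in the differential algebra, so that $S\in\fn=\mf a^{<0}$. Expanding \eqref{eq:gauge} gives
\begin{align*}
Q(S)&\= e^{\ad S}(\Lambda+q)-\Lambda-\frac{e^{\ad S}-1}{\ad S}(\partial S)\\
&\= q-[e,S]+\tilde R(S)\,,
\end{align*}
where $\tilde R(S)=[S,q]-\partial S+(\text{terms at least quadratic in }S)$ gathers all subleading contributions. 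Here I use \eqref{20171109:eq1} in the form $[\fn,e_m(\lambda)]=0$: since $\Lambda=e+e_m(\lambda)$, the only part producing a term linear in $S$ is $e$, which yields the leading term $-[e,S]$.

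\emph{Reduction to a fixed point.} Because $e$ is a principal nilpotent of $\mf a$ with $\mathfrak{sl}_2$-triple $(e,f,\rho^\vee)$, the operator $\ad e$ is injective on $\fn$, so $\ad e\colon\fn\to[e,\fn]$ is a linear isomorphism onto the \emph{homogeneous} subspace $[e,\fn]=\bigoplus_{i\le0}[e,\mf a^{i-1}]$. Let $\pi_{[e,\fn]}\colon\fb\to[e,\fn]$ be the projection attached to the splitting \eqref{DSgaugeequal} and put $T=(\ad e)^{-1}\circ\pi_{[e,\fn]}$. Since being $V$-valued means $\pi_{[e,\fn]}(Q(S))=0$, equivalently $T(Q(S))=0$, and $T([e,S])=S$, condition \eqref{canform} becomes the fixed-point equation $S=T(q)+T(\tilde R(S))$. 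The crucial point is the triangularity estimate for this equation: since $\partial$ preserves the principal degree while $(\ad e)^{-1}$ lowers it by one, and since $q\in\fb=\mf a^{\le0}$, the degree-$(-k)$ component of $T(\tilde R(S))$ involves only the components $S^{(j)}$ with $j>-k$. Hence the degree-$(-k)$ part of the equation determines $S^{(-k)}$ uniquely from $q$ and from the previously constructed $S^{(-1)},\dots,S^{(-k+1)}$; the recursion terminates because $\dim\mf a<\infty$ (so $\mf a^i=0$ for $i\le-h_{\mf a}$), and we obtain existence and uniqueness of $(S_{\rm can},Q_{\rm can})$ with $Q_{\rm can}=Q(S_{\rm can})$. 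I stress that this filtered/fixed-point formulation, rather than a naive degree-by-degree matching, is what makes the argument work for an \emph{arbitrary} (possibly non-homogeneous) complement $V$, since only $[e,\fn]$ needs to be graded.

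\emph{Refinements.} That $S_{\rm can}\in\mc A_{\bm q}\otimes\fn$ and $Q_{\rm can}\in\mc A_{\bm q}\otimes V$ is immediate from the construction: every step uses only $q$, the derivation $\partial$, the bracket of $\mf a$, $\pi_{[e,\fn]}$ and $(\ad e)^{-1}$, so no auxiliary indeterminate $S_i$ is ever introduced; moreover uniqueness holds verbatim over the larger algebra $\mc A_{\bm q,\bm S}$, so the unique solution necessarily has coefficients in $\mc A_{\bm q}$. To prove $Q_{\rm can}\in\mc R\otimes V$, i.e. that the coefficients of $Q_{\rm can}$ are gauge invariant, I would use uniqueness together with the group law of gauge transformations. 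The operators $\partial+\Lambda+q$ and $\partial+\Lambda+Q$, with $Q$ as in \eqref{eq:gauge} for $S=\sum_i S_i p_i$, are conjugate by $e^{\ad S}$; since $\fn$ is a nilpotent Lie algebra closed under the bracket, composing $e^{\ad S}$ with the canonical gauge of $\partial+\Lambda+Q$ produces, via Baker--Campbell--Hausdorff, a single $e^{\ad S''}$ with $S''\in\mc A_{\bm q,\bm S}\otimes\fn$ bringing $\partial+\Lambda+q$ to canonical form. By uniqueness over $\mc A_{\bm q,\bm S}$ the resulting $\fb$-tail equals $Q_{\rm can}$; on the other hand, by naturality of the recursion (which commutes with the differential algebra homomorphism $f$ sending $q_i\mapsto Q_i$), the canonical form of $\partial+\Lambda+Q$ is $f$ applied to that of $\partial+\Lambda+q$, namely $f(Q_{\rm can})$. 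Comparing gives $f(Q_{\rm can})=Q_{\rm can}$, which is exactly the assertion $Q_{\rm can}\in\mc R\otimes V$.

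\emph{Main obstacle.} The delicate part is the triangularity bookkeeping of the second step: one must verify that in the degree-$(-k)$ component the new unknown $S^{(-k)}$ occurs \emph{only} in $-[e,S^{(-k)}]$ and never inside $\tilde R$, neither through the higher commutators of $e^{\ad S}(\Lambda+q)$ nor through the $\partial$-twist $\frac{e^{\ad S}-1}{\ad S}(\partial S)$. This rests precisely on $q\in\mf a^{\le0}$, on $[\fn,e_m(\lambda)]=0$, and on the fact that $\partial$ raises the differential-polynomial degree while preserving the principal degree. The gauge-invariance step is conceptually clean, but its rigor hinges on stating carefully the naturality of the recursion under $f$.
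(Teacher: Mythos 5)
Your existence-and-uniqueness argument is, at its core, the same recursion on the principal gradation that the paper uses: your fixed-point equation $S=T(q)+T(\tilde R(S))$, solved component by component, is a repackaging of the paper's degree-by-degree equations \eqref{Qq00} and \eqref{Qcank}, and your gauge-invariance step is essentially verbatim the paper's (apply $f$ to \eqref{canform}, merge $e^{\ad f(S_{\rm can})}e^{\ad S}$ into a single $e^{\ad B}$ by Baker--Campbell--Hausdorff, and invoke uniqueness over $\mc A_{\bm q,\bm S}$). The observation that the construction never introduces the variables $\bm S$, so that $S_{\rm can}\in\mc A_{\bm q}\otimes\fn$ and $Q_{\rm can}\in\mc A_{\bm q}\otimes V$, also matches the paper.

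However, the point you single out as crucial --- that your formulation works for an \emph{arbitrary}, possibly non-homogeneous, complement $V$, ``since only $[e,\fn]$ needs to be graded'' --- is false, and this is precisely where your triangularity argument has a gap. The operator $T=(\ad e)^{-1}\circ\pi_{[e,\fn]}$ lowers principal degree by one only if $\pi_{[e,\fn]}$ \emph{preserves} principal degree, which holds exactly when $V$ is graded; for a non-graded complement the projection along $V$ can raise degree, so the degree-$(-k)$ component of $T(\tilde R(S))$ picks up $\p(S^{(-j)})$ and $[S^{(-j)},q]$ with $j\ge k$, and the recursion never closes. Indeed the Proposition itself fails for non-graded $V$: take $\mf a=\mf{sl}_2$ with standard triple $(e,h,f)$, so $\fn=\mb C f$, $\fb=\mb C h\oplus\mb C f$, $[e,\fn]=\mb C h$, and take the DS gauge $V=\mb C(f+ch)$ with $c\neq0$ (this satisfies \eqref{DSgaugeequal}). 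Writing $q=q_h h+q_f f$ and $S=sf$, and working with the reduced operator $\p+e+q$ (the $\lambda$-part is inert by \eqref{20171109:eq1}), one computes $e^{\ad S}(\p+e+q)=\p+e+(q_h-s)h+\bigl(q_f-\p(s)+2sq_h-s^2\bigr)f$, so the canonical-form condition reads $q_h-s=c\,\bigl(q_f-\p(s)+2sq_h-s^2\bigr)$, i.e. $c\,\p(s)=cq_f+2csq_h-cs^2-q_h+s$. No $s\in\mc A_{\bm q,\bm S}$ can satisfy this: if $s$ depends on some jet variable of maximal order $d$, the left-hand side has jet order $d+1$ while the right-hand side has order at most $d$; and if $s$ is a constant the left-hand side vanishes while the right-hand side contains $cq_f$. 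So for $c\neq 0$ there is no differential-polynomial gauge fixing at all (only the graded choice $c=0$ gives $s=q_h$ and the Miura map $Q_f=q_f-\p(q_h)+q_h^2$). The gradedness of $V$, recorded in \eqref{eq:Vdec}, is therefore not a removable convenience but a necessary hypothesis; it is used silently in your triangularity step exactly as it is used in the paper's induction. Once you add it, your proof is correct and coincides with the paper's.
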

\begin{proof}
We give the detail of the proof for completeness.
Recalling that $\Lambda= e + e_m(\lambda)$ and $[\mf n,e_m(\lambda)]=0$, we find that 
equation~\eqref{canform} is equivalent to
\[
e^{\ad{S_{\rm can}}} (\p + e +q) \= \p \+ e \+ Q_{\rm can} \,.
\]
By a straightforward computation the above equation is further equivalent to
\beq\label{20230403:eq1}
e\+ Q_{\rm can} \= \sum_{m=0}^{\infty} 
 \frac{(\ad S_{\rm can})^m (\p(S_{\rm can}))}{(m+1)!} 
 \+ 
e^{\ad{S_{\rm can}}}  (e + q) \,.
\eeq
For $A\in\mc A_{\bm q,\bm s}\otimes \mf g$, let us denote by $A^{\{k\}}\in\mc A_{\bm q,\bm s}\otimes \mf g^k$, $k\in\mb Z$, the homogeneous component  of $A$ with principal degree $k$.
Equation \eqref{20230403:eq1} can be solved recursively by comparing the terms with principal degree~$-k$, $k\geq0$, on both sides.
For $k=0$, we have
\beq\label{Qq00}
Q_{\rm can}^{\{0\}} \+ \Bigl[e,S_{\rm can}^{\{-1\}}\Bigr] \= q^{\{0\}}\,.
\eeq
From~\eqref{DSgaugeequal} 
we then obtain existence and uniqueness of~$S_{\rm can}^{\{-1\}}$ and~$Q_{\rm can}^{\{0\}}$ satisfying \eqref{Qq00}.
Moreover, we have $S_{\rm can}^{\{-1\}}\in\mathcal{A}_q\otimes \fn$
and $Q_{\rm can}^{\{0\}}\in \mathcal{A}_q\otimes V$. 
Next, comparing the terms with principal degree~$-k$, $k\geq1$, of both sides of \eqref{20230403:eq1} we obtain
\beq\label{Qcank}
Q_{\rm can}^{\{-k\}} \+ \Bigl[e,S_{\rm can}^{\{-k-1\}}\Bigr] \= 
M_k
\,,\eeq
where $M_k$ takes value in  $\mf b$ and depends polynomially in the entries of $q$ and~$S_{\rm can}^{\{-1\}},\dots,S_{\rm can}^{\{-k\}}$, and their derivatives, which by induction
have already been determined.
Using again the direct sum decomposition~\eqref{DSgaugeequal} 
we obtain existence and uniqueness of~$S_{\rm can}^{\{-k-1\}}$ and~$Q_{\rm can}^{\{k\}}$ satisfying \eqref{Qcank}.
Furthermore, by induction, from \eqref{Qcank} it follows that  $S_{\rm can}^{\{-k-1\}}\in\mathcal{A}_q\otimes \fn$
and $Q_{\rm can}^{\{-k\}}\in \mathcal{A}_q\otimes V$. 

Next, we show that all the entries of 
$Q_{\rm can}$ are gauge invariants.
By applying the differential algebra homomorphism $f$ to the entries of both sides of~\eqref{canform} and using
\eqref{eq:gauge} we get 
\[e^{\ad f(S_{\rm can})} (\p \+ \Lambda(\lambda) \+ Q)
\=e^{\ad f(S_{\rm can})} e^{\ad S} (\p \+ \Lambda(\lambda) \+ q) 
\=\p \+ \Lambda(\lambda) \+ f(Q_{\rm can}) \,.\]
By the Baker--Campell--Hausdorff formula, there exists~$B\in\mc A_{\bm q,\bm S}\otimes\mf n$ such that
$e^{\ad f(S_{\rm can})} e^{\ad S} =e^{\ad B}$. Hence, we get
\[ e^{\ad B} (\p \+ \Lambda(\lambda) \+ q) \= \p \+ \Lambda(\lambda)\+ f(Q_{\rm can})\,,\]
with $f(Q_{\rm can})\in\mc A_{\bm q,\bm S}\otimes V$. By uniqueness we have $(B,f(Q_{\rm can}))=(S_{\rm can},Q_{\rm can})$.
\end{proof}

It has been shown~\cite{DS}
that $V$ has the decomposition:
\begin{equation}\label{eq:Vdec}
V \=\bigoplus_{a=1}^{\ell} V_a \,, \quad \dim_{\CC} V_a\=1\,,
\end{equation}
where the 
non-zero elements in $V_a$ have principal degree $-m_a'$, $m'_a$ denoting the exponents of the semisimple Lie algebra $\mf a$.
(Note that $m_a'$ is also an exponent of $\widetilde{ \mf g}$ and there could be repetitions among the exponents of $\mf a$, namely $m_a'=m_b'$ for some $a,b=1,\dots\ell$.)

We take a basis $\bigl\{v_1,\dots, v_{\ell}\bigr\}$ of~$V$ with $\deg v_a= - m_a'$  
and write $Q_{\rm can}=\sum_{a=1}^{\ell} u_a v_a$. The next result is an immediate consequence of Proposition \ref{NQ}.
\begin{cor}[\cite{DS}]\label{ruu}
The elements $u_1,\dots,u_\ell\in\mc A_{\bm q}$ are independent, and the differential algebra of gauge invariants $\mathcal{R}\subset\mc A_{\bm q}$ is an algebra of differential polynomials (cf. \eqref{diffpoly}) in $u_1,\dots, u_\ell$.
\end{cor}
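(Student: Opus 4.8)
The plan is to extract everything from the recursive solution of \eqref{20230403:eq1} already carried out in the proof of Proposition~\ref{NQ}, and then to obtain the non-obvious inclusion $\mathcal R\subseteq\langle u_1,\dots,u_\ell\rangle$ by specializing the generic gauge parameter to the canonical one. Throughout I fix a homogeneous basis of $\fb$ adapted to the splitting \eqref{DSgaugeequal}: the chosen basis $v_1,\dots,v_\ell$ of $V$ (with $\deg v_a=-m_a'$, cf.~\eqref{eq:Vdec}) together with a homogeneous basis $w_1,\dots,w_{\dim\fn}$ of $[e,\fn]$, and I write $q=\sum_a \tilde q_a v_a+\sum_j \hat q_j w_j$. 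The coordinates $\tilde q_a,\hat q_j$ are $\mb F$-linear combinations of the original $q_i$, hence again a set of differential generators of $\mc A_{\bm q}$, and I denote by $\pi_V\colon\fb\to V$ the projection along $[e,\fn]$.

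First I would prove that $u_1,\dots,u_\ell$ are differentially independent. Projecting the degree $-k$ equations \eqref{Qq00} and \eqref{Qcank} onto $V$ gives $Q_{\rm can}^{\{-k\}}=\pi_V(M_k)$, and tracking the principal gradation in \eqref{20230403:eq1} shows that $M_k=q^{\{-k\}}+(\text{a differential polynomial in the components of }q\text{ of principal degree }>-k)$. Since $\pi_V(q^{\{-m_a'\}})$ has $V_a$-coordinate exactly $\tilde q_a$, one gets $u_a=\tilde q_a+(\text{a differential polynomial in the }\tilde q,\hat q\text{ of principal degree }>-m_a')$. Ordering the variables by principal degree, this is a triangular substitution with identity leading part; hence the $u_a$ are differentially algebraically independent and the differential subalgebra $\mathcal R_0$ they generate is a free differential polynomial algebra. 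Moreover each $u_a$ is a gauge invariant, because $Q_{\rm can}\in\mathcal R\otimes V$ by Proposition~\ref{NQ}, so $\mathcal R_0\subseteq\mathcal R$.

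The core of the argument is the reverse inclusion. I would introduce the differential algebra homomorphism $\mathrm{ev}\colon\mc A_{\bm q,\bm S}\to\mc A_{\bm q}$ that is the identity on $\mc A_{\bm q}$ and sends each gauge parameter $S_j$ to the corresponding component of $S_{\rm can}\in\mc A_{\bm q}\otimes\fn$; this exists by the universal property of differential polynomial algebras. Because $e^{\ad S}\mathcal L\big|_{\bm S=S_{\rm can}}=e^{\ad S_{\rm can}}\mathcal L=\p+\Lambda(\lambda)+Q_{\rm can}$ by \eqref{eq:gauge} and \eqref{canform}, the composite $g:=\mathrm{ev}\circ f\colon\mc A_{\bm q}\to\mc A_{\bm q}$ sends the coordinates of $q$ to those of $Q_{\rm can}$, that is $g(\tilde q_a)=u_a$ and $g(\hat q_j)=0$ in the adapted basis; consequently $\mathrm{im}\,g=\mathcal R_0$. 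Now for any $w\in\mathcal R$ one has $f(w)=w$, and applying $\mathrm{ev}$, which fixes $w\in\mc A_{\bm q}$, yields $w=\mathrm{ev}(f(w))=g(w)\in\mathcal R_0$. Thus $\mathcal R\subseteq\mathcal R_0$, and together with the previous paragraph this gives $\mathcal R=\mathcal R_0$, the free differential polynomial algebra in $u_1,\dots,u_\ell$.

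I expect the main obstacle to be this last inclusion, specifically the verification that $g=\mathrm{ev}\circ f$ genuinely realizes the passage to canonical form. The delicate points are that $\mathrm{ev}$ is a well-defined differential homomorphism restricting to the identity on $\mc A_{\bm q}$, and that specializing the generic gauged potential $Q$ at $\bm S=S_{\rm can}$ returns precisely $Q_{\rm can}$, which rests on the uniqueness statement in Proposition~\ref{NQ}. Everything else, including the triangularity used for independence, is a direct bookkeeping of the principal gradation in the recursion already established there.
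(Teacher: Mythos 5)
Your proposal is correct and takes essentially the same route as the paper, which states Corollary~\ref{ruu} as an immediate consequence of Proposition~\ref{NQ} (following Drinfeld--Sokolov): your two ingredients---differential independence of the $u_a$ via the triangular, identity-leading-part structure of the recursion \eqref{Qq00}, \eqref{Qcank}, and the inclusion $\mathcal R\subseteq\mathcal R_0$ via the retraction $g=\mathrm{ev}\circ f$ obtained by specializing $\bm S$ to $S_{\rm can}$ and using \eqref{eq:gauge} together with \eqref{canform}---are exactly the standard details the paper leaves implicit. No gaps to report.
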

The next result has been proved in~\cite{BDY,DVY}.
\begin{lemma}[\cite{BDY,DVY}]\label{OmegaR}
For every $a,b=1,\dots, n$, and $k_1,k_2\geq0$, we have
\beq
\Omega_{a,k_1;b,k_2} \; \in \; \mathcal{R} \,.
\eeq
\end{lemma}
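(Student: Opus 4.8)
The goal is to prove that each two-point correlation function $\Omega_{a,k_1;b,k_2}$ is gauge invariant, i.e.\ fixed by the homomorphism $f:\mathcal A_{\bm q}\to\mathcal A_{\bm q,\bm S}$ defined via \eqref{eq:gauge}. The plan is to trace how the gauge transformation $e^{\ad S}$ acts on the basic resolvents $R_{m_a}(\lambda)$, since the $\Omega_{a,k_1;b,k_2}$ are built from these through the generating series \eqref{tau-symm-eq}. The key observation I would exploit is that $f$ is induced by conjugating the Lax operator $\mathcal L$ by $e^{\ad S}$, and conjugation is compatible with the defining properties \eqref{basicdef1}--\eqref{boundaryr} of the resolvents.

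First I would compute how the resolvents transform under $f$. Applying $f$ to $\mathcal L=\partial+\Lambda(\lambda)+q$ gives, by \eqref{eq:gauge}, the operator $e^{\ad S}\mathcal L$. Writing $R_{m_a}^f(\lambda)$ for the result of applying $f$ entrywise to $R_{m_a}(\lambda)$, I claim that
\[
R_{m_a}^f(\lambda)\=e^{\ad S}\bigl(R_{m_a}(\lambda)\bigr)\,.
\]
To justify this, note that $f$ is a differential algebra homomorphism, so it commutes with $\partial$ and with taking Lie brackets in $\mathcal A_{\bm q,\bm S}\otimes\mathfrak g((\lambda^{-1}))$. Hence applying $f$ to \eqref{basicdef1} yields $[f(\mathcal L),R_{m_a}^f(\lambda)]=0$, that is $[e^{\ad S}\mathcal L,R_{m_a}^f(\lambda)]=0$. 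Conjugating, this is equivalent to $[\mathcal L,e^{-\ad S}R_{m_a}^f(\lambda)]=0$, so $e^{-\ad S}R_{m_a}^f(\lambda)$ commutes with $\mathcal L$. Since $f$ preserves the principal gradation (as $S$ and hence $Q$ lie in components that respect it) and fixes the $\Lambda_{m_a}(\lambda)$, the element $e^{-\ad S}R_{m_a}^f(\lambda)$ still satisfies the leading-term normalization \eqref{basicdef2} and the bilinear-form normalization \eqref{boundaryr}; the latter uses that $(\cdot|\cdot)$ is $\ad$-invariant so that $e^{\ad S}$ is an isometry. By the uniqueness in Definition \ref{resol-defi}, $e^{-\ad S}R_{m_a}^f(\lambda)=R_{m_a}(\lambda)$, giving the claimed transformation law.

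With the transformation law in hand, I would apply $f$ to both sides of the generating series \eqref{tau-symm-eq}. On the right-hand side the only $\bm q$-dependence sits inside the bilinear form $\bigl(R_{m_a}(\lambda)\,|\,R_{m_b}(\mu)\bigr)$; the subtracted term involving $\delta_{a+b,n+1}$ is a constant in $\lambda,\mu$ with no $\bm q$-dependence, so $f$ leaves it untouched. Using the transformation law and the $\ad$-invariance of the form,
\[
f\bigl(R_{m_a}(\lambda)\,|\,R_{m_b}(\mu)\bigr)
\=\bigl(e^{\ad S}R_{m_a}(\lambda)\,\big|\,e^{\ad S}R_{m_b}(\mu)\bigr)
\=\bigl(R_{m_a}(\lambda)\,|\,R_{m_b}(\mu)\bigr)\,.
\]
Therefore $f$ fixes the entire right-hand side of \eqref{tau-symm-eq}. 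Since $f$ acts coefficientwise on the power series in $\lambda^{-1},\mu^{-1}$ and the coefficients are exactly the $\Omega_{a,k_1;b,k_2}$, comparing coefficients of $\lambda^{-k_1N_m-1}\mu^{-k_2N_m-1}$ gives $f(\Omega_{a,k_1;b,k_2})=\Omega_{a,k_1;b,k_2}$, which is precisely the statement that $\Omega_{a,k_1;b,k_2}\in\mathcal R$.

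The main obstacle I anticipate is the verification that $e^{-\ad S}R_{m_a}^f(\lambda)$ genuinely satisfies all three defining conditions, and in particular the careful bookkeeping that $f$ preserves the principal gradation and hence the leading-term condition \eqref{basicdef2}. One must check that $S\in\mathcal A_{\bm q,\bm S}\otimes\mathfrak n$ has the grading behaviour making $e^{\ad S}$ respect the relevant filtration so that the subleading terms stay subleading; this is the step where the explicit structure of the gauge transformation, rather than a formal manipulation, is needed. Everything else reduces to the uniqueness clause of Definition \ref{resol-defi} and the $\ad$-invariance of $(\cdot|\cdot)$, both of which are available.
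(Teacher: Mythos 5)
Your proposal is correct and follows essentially the same route as the paper's proof: apply $f$ to the generating series \eqref{tau-symm-eq}, identify $f(R_{m_a}(\lambda))=e^{\ad S}R_{m_a}(\lambda)$ via the uniqueness clause of the resolvent characterization, and conclude by the $\ad$-invariance of $(\cdot\,|\,\cdot)$. The only cosmetic difference is that you verify uniqueness for the original system after conjugating by $e^{-\ad S}$, whereas the paper verifies that both $f(R_{m_a}(\lambda))$ and $e^{\ad S}R_{m_a}(\lambda)$ solve the gauge-transformed system \eqref{20230403:eq2}; these are equivalent.
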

\begin{proof}
Applying the differential algebra $f$ to the defining equation~\eqref{tau-symm-eq} we have
\beq\label{s1}
\sum_{k_1,k_2\geq 0} \frac{f(\Omega_{a,k_1;b,k_2})}{\lambda^{k_1N_m+1} \mu^{k_2N_m+1}}
\=  \pi_{\lambda,\mu} \Biggl(\frac{\bigl(f(R_{m_a} (\lambda)) \, | \, f(R_{m_b} (\mu))\bigr)}{(\lambda-\mu)^2} \,-\,
\frac{\delta_{a+b,n+1}}r \frac{m_a  \lambda^{N_m} + m_b \mu^{N_m}}{(\lambda-\mu)^2}\Biggr)\,.
\eeq
Observe that for the basic resolvents $R_{m_a}(\lambda) \in  \mathcal{A}_{\bm q}\otimes  \g((\lambda^{-1}))$,
$a=1,\dots,n$,
the series $f(R_{m_a}(\lambda))$ satisfy the following three properties:
\begin{equation}
\begin{split}\label{20230403:eq2}
& \bigl[e^{\ad S} {\mathcal L}, f(R_{m_a}(\lambda))\bigr] \= 0 \, ,
\\
& f(R_{m_a}(\lambda)) \= \Lambda_{m_a}(\lambda) 
\+ \mbox{lower order terms w.r.t.} \deg^{\rm pr} \, , \\
& \bigl(f(R_{m_a}(\lambda))\, | \, f(R_{m_b}(\lambda)) \bigr) \= h \, \delta_{a+b,n+1} \, \lambda^{N_m} \, . 
\end{split}
\end{equation}
Using the facts that  $e^{\ad S}$ is a Lie algebra automorphism and that $\mf n\subset\widetilde{\mf g}^{<0}$, and the invariance of the bilinear form, from \eqref{basicdef1}-\eqref{boundaryr} it follows that
the series $e^{\ad{S}} R_{m_a}(\lambda)$ also satisfy~\eqref{20230403:eq2}.
Then, by the uniqueness of solutions to \eqref{20230403:eq2} (which can be proved similarly to the uniqueness of solutions to \eqref{basicdef1}-\eqref{boundaryr}) it follows that
\beq\label{s2}
f(R_{m_a}(\lambda)) \= e^{\ad{S}} R_{m_a}(\lambda)\,,
\quad a=1,\dots,n\,.
\eeq
Substituting~\eqref{s2} in~\eqref{s1}, using invariance of the bilinear form, and comparing the result with the RHS of~\eqref{tau-symm-eq}) we immediately get that
\[f(\Omega_{a,k_1;b,k_2}) \= \Omega_{a,k_1;b,k_2}\,,\quad a,b=1,\dots,n\,, k_1,k_2\geq0\,.\]
This concludes the proof.
\end{proof}

The next result was proven in \cite{BDY0} (cf. \cite{DLZ0}) in the case of an untwisted affine Kac-Moody algebra
$\widehat{\mf g}$ and the choice of the special vertex $c_0$ of its Dynkin diagram.
The proof uses the fact that the Lie algebra $\mf a$ defined in \eqref{dec:dynkin} is simple, which happens also for the choice of the special vertex $c_0$ in the twisted case.
Using similar arguments we then arrive at the following result.
\begin{prop}\label{prop:Miura}
For the pair $(c_0,\widehat{\mf g})$
the collection of gauge invariants
\begin{equation}\label{eq:Vmiura}
\bm V=(\Omega_{a,0;1,0})_{a=1}^\ell\in\mc R^\ell
\end{equation}
is a Miura-type $\ell$-ple.
\end{prop}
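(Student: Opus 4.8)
The plan is to verify the nondegeneracy condition \eqref{Jacobiannondegen} for the $\ell$-ple $\bm V=(\Omega_{a,0;1,0})_{a=1}^\ell$, i.e. to show that the matrix of leading ($\epsilon$-degree zero) Jacobian entries $\bigl(\partial \Omega_{a,0;1,0}^{[0]}/\partial u_\beta\bigr)_{a,\beta=1}^\ell$ is invertible. Since $\Omega_{a,0;1,0}\in\mathcal R$ by Lemma~\ref{OmegaR}, and $\mathcal R$ is a differential polynomial algebra freely generated by $u_1,\dots,u_\ell$ (Corollary~\ref{ruu}), it makes sense to differentiate with respect to the $u_\beta$. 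The first step is to extract from the generating identity \eqref{tau-symm-eq} an explicit closed form, or at least the leading term, of each $\Omega_{a,0;1,0}$ as a differential polynomial in the $u_\beta$. Reading off the coefficient of $\lambda^{-1}\mu^{-1}$ (the $k_1=k_2=0$ term) on the right-hand side of \eqref{tau-symm-eq} reduces this to computing the residue/projection $\pi_{\lambda,\mu}$ of the bilinear pairing $\bigl(R_{m_a}(\lambda)\,|\,R_{1}(\mu)\bigr)/(\lambda-\mu)^2$, using $\Lambda_1=\Lambda$ and the normalization \eqref{eq:normalization1}.

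Next I would isolate the dispersionless (degree-zero in $\epsilon$, lowest order in derivatives) part. On the $\epsilon^0$ level the basic resolvent $R_{m_a}(\lambda)$ reduces to its leading behaviour determined by $\Lambda_{m_a}(\lambda)$ dressed by the canonical form $Q_{\rm can}=\sum_b u_b v_b$, so the pairing $\bigl(R_{m_a}(\lambda)\,|\,R_1(\mu)\bigr)$ becomes, to the relevant order, a bilinear expression whose $u_\beta$-linear part is controlled by $\bigl(\Lambda_{m_a}(\lambda)\,|\,[\,\cdot\,,\text{something}]\bigr)$ paired against $v_\beta$. The key structural input, available precisely for the special vertex $c_0$ because then $\mf a$ is \emph{simple}, is that the exponents $m_1',\dots,m_\ell'$ of $\mf a$ are a permutation matched against the chosen degrees, so that the map $u_\beta\mapsto \partial\Omega_{a,0;1,0}^{[0]}/\partial u_\beta$ is, after identifying $V_a$ with the degree $-m_a'$ line, essentially the Cartan/invariant-form pairing between the $\Lambda_{m_a}$ and the basis $v_\beta$ of the gauge $V$. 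Because $V$ is a gauge of DS-type, \eqref{DSgaugeequal} gives $\fb=V\oplus[e,\fn]$, and the invariant form restricts nondegenerately to the relevant graded pieces; this is what forces the Jacobian to be triangular (with respect to the principal-degree filtration) with nonzero diagonal, hence invertible.

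The main obstacle is establishing this triangularity-with-nonzero-diagonal claim rigorously: one must show that $\partial\Omega_{a,0;1,0}^{[0]}/\partial u_\beta$, evaluated at the origin in jet space, picks out exactly a nondegenerate pairing and not merely a pairing that could be singular. Concretely, the hard part is controlling the cross terms coming from the dressing $e^{-\ad U(\lambda)}$ in the definition of $R_{m_a}$ and showing they contribute only to strictly lower-order (higher principal degree) entries, so that the leading matrix is exactly the invariant-form Gram matrix of $\{\Lambda_{m_a}\}$ against $\{v_\beta\}$. Here I would invoke that the construction in \cite{BDY0} already carries out this computation for the untwisted pair $(c_0,\widehat{\mf g})$, and then appeal to the simplicity of $\mf a$ at the special vertex (noted in the paragraph preceding the Proposition) to run the identical argument verbatim in the present, possibly twisted, setting; the normalization \eqref{eq:normalization1} guaranteeing $(\Lambda_{m_a}|\Lambda_{m_b})=\delta_{a+b,n+1}h\lambda^{N_m}$ is exactly the nondegeneracy certificate needed to conclude $\det\neq0$.
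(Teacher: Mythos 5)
Your proposal is correct and follows essentially the same route as the paper: the paper likewise reduces the claim to the computation carried out in \cite{BDY0} (cf.~\cite{DLZ0}) for the untwisted pair $(c_0,\widehat{\mf g})$, and transfers it to the general setting using precisely the fact that $\mf a$ is simple for the special vertex $c_0$. Your additional sketch of the Jacobian as a nondegenerate invariant-form pairing of the $\Lambda_{m_a}$ against the gauge basis, with cross terms from the dressing pushed to lower order, is a reasonable unpacking of what that cited computation does, so there is nothing essentially different to compare.
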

Using the results in 
\cite{DSKV13} on the structure of the series $H(\lambda)$  defined in
\eqref{eq:fundamental},
Proposition \ref{prop:Miura} can also be deduced from the relation showed in \cite{DVY} between the tau-structure $\Omega_{a,k_1;b,k_2}$ and 
$H(\lambda)$.

\subsection{DS hierarchies and modified DS hierarchies}\label{sec:5.4}

\subsubsection{DS derivations}

Recall the family of admissible derivations $D_{a,k}^{\rm pre}$ on $\mc A_{\bm q}$, $a=1,\dots,n$, $k\geq0$, defined in
\eqref{pre-DS}.
Let us extend
$D_{a,k}^{\rm pre}$ to an admissible derivation (which, by an abuse of notation, we still denote with the same symbol)
$D_{a,k}^{\rm pre}:\mc A_{\bm q,\bm S}\to \mc A_{\bm q,\bm S}$ by defining
\begin{equation}\label{eq:extension}
D_{a,k}^{\rm pre}(S_{i})=0\,,\quad i=1,\dots\dim\mf n
\,,
\end{equation}
and extending by the Leibniz rule.
The following result holds.
\begin{lemma}\label{fddf}
For every $a=1,\dots,n$ and $k\geq0$, we have
\[f \circ D_{a,k}^{\rm pre}(p)  \= D_{a,k}^{\rm pre} \circ f(p) \,, \]
for every $p\in\mc A_{\bm q}$.
\end{lemma}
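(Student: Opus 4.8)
The plan is to reduce the identity $f \circ D_{a,k}^{\rm pre} = D_{a,k}^{\rm pre} \circ f$ to a check on the differential generators $q_1,\dots,q_{\dim\fb}$ of $\mc A_{\bm q}$, and then to verify it there by an explicit computation inside the Lie algebra of operators. For the reduction, observe that since $f$ is a differential algebra homomorphism it commutes with $\partial$, while $D_{a,k}^{\rm pre}$ is admissible, i.e. $[D_{a,k}^{\rm pre},\partial]=0$. Hence both $f\circ D_{a,k}^{\rm pre}$ and $D_{a,k}^{\rm pre}\circ f$ are $f$-derivations $\mc A_{\bm q}\to\mc A_{\bm q,\bm S}$ (meaning $\Phi(ab)=\Phi(a)f(b)+f(a)\Phi(b)$) that commute with $\partial$, so each is determined by its values on $q_1,\dots,q_{\dim\fb}$. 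Writing $q=\sum_i q_iv_i$ and recalling $f(q)=Q$ from \eqref{eq:gauge}, the statement is thus equivalent to the single $\fb$-valued identity $f\bigl(D_{a,k}^{\rm pre}(q)\bigr)=D_{a,k}^{\rm pre}(Q)$.

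For the left-hand side I would apply $f$ coefficientwise to \eqref{pre-DS}, using that $f$ (extended as the identity on the $\g((\lambda^{-1}))$-factor) commutes with $\partial$, with $\ad$ of constant elements, and with the standard-gradation projection $(\,\cdot\,)_+$, and using the crucial relation $f(R_{m_a}(\lambda))=e^{\ad S}R_{m_a}(\lambda)$ established in \eqref{s2}. Together with $f(q)=Q$ and $f(\Lambda)=\Lambda$ this gives
$$
f\bigl(D_{a,k}^{\rm pre}(q)\bigr)=\Bigl[\bigl(\lambda^{kN_m}e^{\ad S}R_{m_a}(\lambda)\bigr)_+\,,\,\p+\Lambda+Q\Bigr]=\Bigl[\bigl(\lambda^{kN_m}e^{\ad S}R_{m_a}(\lambda)\bigr)_+\,,\,e^{\ad S}\L\Bigr].
$$
For the right-hand side I would use the explicit form of $Q$ coming from \eqref{eq:gauge} (the gauge formula as in \eqref{20230403:eq1}), namely $\Lambda+Q=\sum_{m\geq0}\frac{(\ad S)^m(\partial S)}{(m+1)!}+e^{\ad S}(\Lambda+q)$. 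Since $D_{a,k}^{\rm pre}(S_i)=0$ by the extension \eqref{eq:extension}, the derivation $D_{a,k}^{\rm pre}$ annihilates every factor of $S$ and kills the first sum, while $D_{a,k}^{\rm pre}(\Lambda)=0$; the Leibniz rule then yields $D_{a,k}^{\rm pre}(Q)=e^{\ad S}\bigl(D_{a,k}^{\rm pre}(q)\bigr)=e^{\ad S}\bigl[\bigl(\lambda^{kN_m}R_{m_a}(\lambda)\bigr)_+,\L\bigr]$.

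Comparing the two expressions, matching them amounts to swapping $e^{\ad S}$ past the projection $(\,\cdot\,)_+$: since conjugation by $e^{S}$ is an automorphism of the operator algebra, $e^{\ad S}[X,\L]=[e^{\ad S}X,e^{\ad S}\L]$, so it remains to see that $\bigl(\lambda^{kN_m}e^{\ad S}R_{m_a}\bigr)_+=e^{\ad S}\bigl(\lambda^{kN_m}R_{m_a}\bigr)_+$. This is the main (and only delicate) point, and it holds because $S\in\fn\subset\mf a$, and by \eqref{dec:dynkin} together with $\mf a=L(\g,\sigma_m)_0$ in \eqref{dec:sigmatwisted} every element of $\mf a$ has standard degree $0$. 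Therefore $\ad S$ preserves the standard gradation, so $e^{\ad S}$ commutes with the projection $(\,\cdot\,)_+$ onto nonnegative standard degree, which closes the argument. I expect the bookkeeping of how $f$ and $D_{a,k}^{\rm pre}$ interact with the operator bracket to be routine; the substantive observation is the standard-degree-zero property of the gauge parameter $S$.
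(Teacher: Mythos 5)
Your proof is correct and follows essentially the same route as the paper's: reduce to the generator $q$, apply $f$ to \eqref{pre-DS} using the key identity $f(R_{m_a}(\lambda))=e^{\ad S}R_{m_a}(\lambda)$ from \eqref{s2}, compute $D_{a,k}^{\rm pre}(Q)=e^{\ad S}\bigl[\bigl(\lambda^{kN_m}R_{m_a}(\lambda)\bigr)_+,\L\bigr]$ using $D_{a,k}^{\rm pre}(S_i)=0$, and match the two sides because $e^{\ad S}$ commutes with $(\,\cdot\,)_+$ as $S$ is independent of $\lambda$. Your write-up merely makes explicit two points the paper leaves implicit (the reduction to generators via $f$-derivations commuting with $\partial$, and the standard-degree-zero justification behind ``$S$ does not contain $\lambda$''), which is a welcome clarification rather than a different argument.
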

\begin{proof}
It suffices to prove the claim for $p=q_i$, $i=1,\dots,\dim\mf b$, or, equivalently, for the element $q=\sum_{i=1}^{\dim \mf b}q_iv_i\in\mc A_{\bm q}\otimes\mf b$.
From the definition of the admissible derivations $D_{a,k}^{\rm pre}$ \eqref{pre-DS} and the Lax operator \eqref{eq:lax} we have
\begin{align*}
 D_{a,k}^{\rm pre} (q) &\= 
 \left[\left(\lambda^{kN_m} R_{m_a}(\lambda)\right)_+ \,,\, \L\right] 
 \\
& \= 
 \left[\left(\lambda^{kN_m} R_{m_a}(\lambda)\right)_+ \,,\, \Lambda(\lambda)+q\right] 
 \,-\, \p\left(\left(\lambda^{kN_m} R_{m_a}(\lambda)\right)_+\right)
 \,.
\end{align*}
By applying the differential algebra homomorphism $f$ to both sides of the above identity
and using equation \eqref{s2}
we obtain
\begin{align*}
 f(D_{a,k}^{\rm pre} (q))
 &\= 
 \left[\left(\lambda^{kN_m} e^{\ad{S}} R_{m_a}(\lambda)\right)_+ \,,\, \Lambda(\lambda)+Q\right] 
 \,-\, \p \left(\left(\lambda^{kN_m} e^{\ad{S}} R_{m_a}(\lambda)\right)_+\right)
 \\
&
\=
\left[\left(\lambda^{kN_m} e^{\ad{S}} R_{m_a}(\lambda)\right)_+ \,,\, \partial+\Lambda(\lambda)+Q\right] 
 \,.
 \end{align*}
On the other hand, using equations \eqref{eq:gauge}, \eqref{eq:extension} and \eqref{pre-DS} we have
\begin{align*}
& D_{a,k}^{\rm pre} (f(q)) \= D_{a,k}^{\rm pre} (Q)
\=D_{a,k}^{\rm pre} (\partial+\Lambda(\lambda)+Q)
\= D_{a,k}^{\rm pre} (e^{\ad S} \mathcal{L})
\= e^{\ad S}  D_{a,k}^{\rm pre} (\mathcal{L})\\
&\= e^{\ad S} \Bigl[\bigl(\lambda^{kN_m} R_{m_a}(\lambda)\bigr)_+ \,,\, \L \Bigr]
\=  \Bigl[e^{\ad S} \bigl(\lambda^{kN_m} R_{m_a}(\lambda)\bigr)_+ \,,\,  \p+ \Lambda(\lambda)+Q \Bigr]
\,.
\end{align*}
Recall that $S$ does not contain $\lambda$, therefore we conclude that 
\[f \bigl(D_{a,k}^{\rm pre} (q)\bigr) \= D_{a,k}^{\rm pre} (f(q))\,. \]
\end{proof}

\begin{cor}\label{fdd}
For every $a=1,\dots,n$ and $k\geq0$ we have
$$
D_{a,k}^{\rm pre}(\mc R)\subset\mc R\,.
$$
\end{cor}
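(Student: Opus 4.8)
The plan is to show that the differential subalgebra $\mc R$ of gauge invariants is preserved by each pre-DS derivation $D_{a,k}^{\rm pre}$, and the key tool is Lemma~\ref{fddf}, which asserts that $D_{a,k}^{\rm pre}$ commutes with the gauge-fixing homomorphism $f$ on all of $\mc A_{\bm q}$. Recall that, by definition, an element $w\in\mc A_{\bm q}$ is a gauge invariant precisely when $f(w)=w$, so $\mc R=\{w\in\mc A_{\bm q}\mid f(w)=w\}$ is exactly the set of fixed points of $f$. The statement $D_{a,k}^{\rm pre}(\mc R)\subset\mc R$ is therefore equivalent to the assertion that $f$ fixes $D_{a,k}^{\rm pre}(w)$ whenever $f$ fixes $w$.

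First I would take an arbitrary $w\in\mc R$, so that $f(w)=w$ by definition of $\mc R$. Then I would apply Lemma~\ref{fddf} with $p=w$, which gives
\[
f\bigl(D_{a,k}^{\rm pre}(w)\bigr)\=D_{a,k}^{\rm pre}\bigl(f(w)\bigr)\=D_{a,k}^{\rm pre}(w)\,,
\]
where the second equality uses $f(w)=w$. This exhibits $D_{a,k}^{\rm pre}(w)$ as a fixed point of $f$, hence $D_{a,k}^{\rm pre}(w)\in\mc R$. Since $w\in\mc R$ was arbitrary, the inclusion $D_{a,k}^{\rm pre}(\mc R)\subset\mc R$ follows for each $a=1,\dots,n$ and $k\geq0$.

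In this argument there is essentially no obstacle: the entire content has already been packaged into the commutation identity $f\circ D_{a,k}^{\rm pre}=D_{a,k}^{\rm pre}\circ f$ established in Lemma~\ref{fddf}. The corollary is a purely formal consequence of the fact that $\mc R$ is the fixed-point set of the endomorphism $f$ and that $D_{a,k}^{\rm pre}$ commutes with $f$. The only subtlety worth flagging is that Lemma~\ref{fddf} is proved for $p$ ranging over $\mc A_{\bm q}$, and the extension of $D_{a,k}^{\rm pre}$ to $\mc A_{\bm q,\bm S}$ via $D_{a,k}^{\rm pre}(S_i)=0$ is exactly what makes the commutation hold; but since every $w\in\mc R$ lies in $\mc A_{\bm q}$, no issue arises. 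Thus the proof is a one-line application of Lemma~\ref{fddf} to the defining equation $f(w)=w$ of gauge invariance.
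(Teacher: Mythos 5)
Your proof is correct and coincides exactly with the paper's own argument: both apply Lemma~\ref{fddf} to an arbitrary $p\in\mc R$ and use $f(p)=p$ to conclude that $D_{a,k}^{\rm pre}(p)$ is again a fixed point of $f$. Nothing further is needed.
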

\begin{proof}
Let $p\in\mc R$. By Lemma \ref{fddf} we have
\[f (D_{a,k}^{\rm pre}(p) ) \= D_{a,k}^{\rm pre}(f(p)) \= D_{a,k}^{\rm pre}(p)\,. \]
Hence, $D_{a,k}^{\rm pre}(p)\in\mc R$.
\end{proof}

\subsubsection{The DS hierarchy}

It follows from Corollary~\ref{fdd} that 
$D_{a,k}^{\rm pre}$, $a=1,\dots,n$, $k\geq 0$, induce admissible derivations 
$D_{a,k}=D_{a,k}^{\rm pre}|_{\mc R}:\mathcal{R}\rightarrow \mathcal{R}$. 
By applying $D_{a,k}$, $a=1,\dots, n$, $k\geq0$, to both sides of~\eqref{canform} we get
$$
D_{a,k}(Q_{\rm can})\=D_{a,k}^{\rm pre}\left(e^{\ad S_{\rm can}}\mc L\right)
\,.
$$
We use equation \eqref{pre-DS} and Lemma 4.3 in \cite{DVY} to rewrite the RHS above and get the following formula for the
action of the admissible derivation $D_{a,k}$ on $\mc R$:
\begin{equation}\label{eq:Dak}
D_{a,k}(Q_{\rm can})\=
\Biggl[ \Bigl(e^{\ad{S_{\rm can}}} \lambda^{k N_m}R_a(\lambda)\Bigr)_+ + 
\sum_{i\geq 0} \frac{1}{(i+1)!}(\ad S_{\rm can})^{i} \bigl(D_{a,k}^{\rm pre} (S_{\rm can})\bigr) \,,\, \mc L_{\rm can}\Biggr]
\,,
\end{equation}
where $\mc L_{\rm can}=\p +\Lambda(\lambda) + Q_{\rm can}$, where $Q_{\rm can}=\sum_{\alpha=1}^\ell u_{\alpha}v_\alpha$.
Comparing the coefficients of $v_\alpha$ in both sides of \eqref{eq:Dak} we get the corresponding hierarchy of evolution equations given by \eqref{pd}:
\beq\label{DShier}
\frac{\p u_\alpha}{\p t_{a,k} }\= D_{a,k} (u_\alpha)\,,\quad\alpha=1,\dots,\ell\,,a=1,\dots,n\,,k\geq0\,,
\eeq
which is known as the \emph{DS hierarchy of type $(X_n^{(r)},c_m)$}.

\begin{prop}\label{D10}
We have that $D_{1,0}=-\partial$.
\end{prop}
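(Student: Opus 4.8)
The plan is to show that the admissible derivation $D_{1,0}$ acts on the generators $u_\alpha$ of $\mathcal R$ exactly as $-\partial$ does, since an admissible derivation is completely determined by its values on the generators. First I would go back to the pre-DS level and examine $D_{1,0}^{\rm pre}(q)$, given by \eqref{pre-DS} with $a=1$, $k=0$. Here the relevant resolvent is $R_{m_1}(\lambda)=R_1(\lambda)$, whose leading term with respect to the principal gradation is $\Lambda_{m_1}(\lambda)=\Lambda(\lambda)$ by the normalization \eqref{eq:normalization1}. The key computation is to identify the non-negative (standard-degree) part $(R_1(\lambda))_+$. Since $\Lambda(\lambda)=e+e_m(\lambda)$ has standard degree $0$ and $1$ respectively, and all lower-order-in-$\deg^{\rm pr}$ corrections to $R_1$ lie in strictly negative standard degree, I expect $(R_1(\lambda))_+=\Lambda(\lambda)$ exactly.

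Granting this, I would compute
\begin{equation*}
D_{1,0}^{\rm pre}(q)=\bigl[(R_1(\lambda))_+,\mathcal L\bigr]=\bigl[\Lambda(\lambda),\partial+\Lambda(\lambda)+q\bigr]
=[\Lambda(\lambda),q]-\partial(\Lambda(\lambda)).
\end{equation*}
Because $\Lambda(\lambda)$ contains no jet variables, $\partial(\Lambda(\lambda))=0$, and since $\Lambda=e+e_m(\lambda)$ with $[\mf n,e_m(\lambda)]=0$ from \eqref{20171109:eq1}, the bracket $[\Lambda(\lambda),q]$ simplifies using the commutation relations for $\mf n$, $\mf b$. The intended conclusion is that this equals $\partial(q)$ up to sign; more precisely I expect the bracket with the cyclic element $e$ to reproduce the shift operator that realizes $\partial$ on $q$, giving $D_{1,0}^{\rm pre}(q)=-\partial(q)$, i.e. $D_{1,0}^{\rm pre}=-\partial$ as derivations of $\mathcal A_{\bm q}$. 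This is the standard fact that the lowest flow of the (pre-)DS hierarchy is the translation flow, and it is consistent with the normalization $\Lambda_1=\Lambda$ chosen in \eqref{eq:normalization1}.

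To descend to $\mathcal R$, I would invoke Corollary~\ref{fdd}, which guarantees $D_{1,0}^{\rm pre}$ restricts to $D_{1,0}$ on $\mathcal R$, and note that $\partial$ also preserves $\mathcal R$ since $\mathcal R$ is a differential subalgebra. Both $D_{1,0}$ and $-\partial$ are admissible derivations of $\mathcal R=\mathcal A(u_1,\dots,u_\ell)$ by Corollary~\ref{ruu}, hence are determined by their values on $u_1,\dots,u_\ell$. Since $D_{1,0}^{\rm pre}=-\partial$ on all of $\mathcal A_{\bm q}$, in particular $D_{1,0}(u_\alpha)=-\partial(u_\alpha)$ for each $\alpha$, and therefore $D_{1,0}=-\partial$ on $\mathcal R$.

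The main obstacle I anticipate is the careful verification that $(R_1(\lambda))_+=\Lambda(\lambda)$, i.e. that no term of negative principal degree in $R_1(\lambda)$ contributes to the non-negative standard part. This requires tracking how the two gradations interact: the principal and standard gradations differ, so one must check that the lower-order corrections (which lie in $\im\,\ad\,\Lambda(\lambda)$ and have strictly smaller principal degree) all have negative standard degree. I would resolve this by using the explicit form $R_1=e^{-\ad U(\lambda)}(\Lambda(\lambda))$ with $U(\lambda)\in\mathcal A_{\bm q}\otimes(\im\,\ad\,\Lambda(\lambda))^{<0}$ from \eqref{eq:fundamental}, so that all corrections come from the strictly negative principal-degree part, and then confirming these sit in strictly negative standard degree. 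Once this projection statement is secured, the remainder is a short bracket computation.
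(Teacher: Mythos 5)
Your proposal fails at its central step, in two distinct ways. First, the projection claim $(R_1(\lambda))_+=\Lambda(\lambda)$ is false, and the obstacle you correctly anticipated cannot be resolved the way you suggest: negative principal degree does \emph{not} imply negative standard degree. Each $f_i$ with $i\neq m$ has $\deg^{\rm pr}f_i=-1$ but standard degree $0$, so the corrections $R_1(\lambda)-\Lambda(\lambda)=\bigl(e^{-\ad U(\lambda)}-1\bigr)\Lambda(\lambda)$, although of strictly negative principal degree, generically have nonzero components of standard degree $0$ (valued in $\fb\subset\mf a=\mf g_{\bar 0}$), and these survive the projection $(\,\cdot\,)_+$. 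The correct statement --- and it is exactly where the paper's own proof starts --- is only $(R_1(\lambda))_+=\Lambda(\lambda)+\widetilde b$ with $\widetilde b\in\mc A_{\bm q}\otimes\fb$, and $\widetilde b$ cannot be dropped: by \eqref{DRL} the element $[(R_1(\lambda))_+,\L]$ must be $\fb$-valued, whereas $[\Lambda(\lambda),\L]=[\Lambda(\lambda),q]$ is not, for generic $q$; indeed $[e,x]\neq 0$ lies in $\mf a^1\not\subset\fb$ for $0\neq x\in\mf a^0$ (as $e$ is principal nilpotent in $\mf a$), and $[e_m(\lambda),\mf a^0]\neq 0$ produces terms of standard degree $1$.

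Second, even granting some projection formula, your intended intermediate conclusion $D^{\rm pre}_{1,0}=-\partial$ on all of $\mc A_{\bm q}$ is itself false, so the argument cannot be repaired at this level. Note that $-\partial(q)=[\Lambda(\lambda)+q,\L]$, so $D^{\rm pre}_{1,0}(q)=-\partial(q)$ would force $(R_1(\lambda))_+-\Lambda(\lambda)-q$ to centralize $\L$; this already fails for $(A_1^{(1)},c_0)$, where with $q=q_1h+q_2f$, $\Lambda=e+\lambda f$ in $\mathfrak{sl}_2$, a direct two-step computation of $R_1$ gives $D^{\rm pre}_{1,0}(q_1)=\tfrac12(q_1^2+q_2-q_1')\neq -q_1'$. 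The equality $D_{1,0}=-\partial$ holds \emph{only} on the gauge invariants $\mc R$, because the first pre-DS flow agrees with $-\partial$ only up to an infinitesimal gauge transformation by an $\fn$-valued element; also, a Lie bracket such as $[\Lambda(\lambda),q]$ can never produce the $x$-derivatives of $q$ you want (the derivative term in $[(R_1)_+,\L]$ is $-\partial((R_1)_+)$, not $-\partial(q)$). This is precisely why the paper's proof does not stay at the pre-DS level: it works with the canonical operator $\mc L_{\rm can}=\p+\Lambda(\lambda)+Q_{\rm can}$ and formula \eqref{eq:Dak}, observes that $D_{1,0}(Q_{\rm can})=[\Lambda(\lambda)+b+\theta,\mc L_{\rm can}]$ with $\theta$ valued in $\fn$ and $b$ in $\fb$, and then proves by induction on the principal degree, using the DS gauge decomposition \eqref{DSgaugeequal}, that this equation has a unique solution $(\psi,\theta)\in(\mc A_{\bm q}\otimes V)\times(\mc A_{\bm q}\otimes\fn)$, exhibited as $\psi=-\partial(Q_{\rm can})$, $\theta=Q_{\rm can}-b$. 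Your final descent step (Corollary \ref{fdd} together with the fact that an admissible derivation of $\mc R$ is determined by its values on $u_1,\dots,u_\ell$) is fine in itself, but it rests entirely on the false pre-DS identity.
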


\begin{proof}
We start recalling from \eqref{basicdef2} and \eqref{dec:sigmatwisted} that
$$
R_1(\lambda)_+=\Lambda(\lambda)+\widetilde b\,,
$$
where $\widetilde b\in\mc A_{\bm q}\otimes (\mf a\cap \widetilde{\mf g}^{\leq 0})=\mc A_q\otimes \mf b$. Hence, using equations
\eqref{eq:Lambda}, the fact that $S_{\rm can}\in\mc A_{\bm q,\bm S}\otimes\mf n$ and the commutation relations \eqref{20171109:eq1} we have that
$$
\left(e^{\ad S_{\rm can}}R_1(\lambda)\right)_+=\Lambda( \lambda)+b\,,
$$
for some $b\in\mc A_q\otimes\mf b$. We can then rewrite 
\eqref{eq:Dak}, for $a=1$ and $k=0$, as
\begin{equation}\label{eq:D10}
D_{1,0}(Q_{\rm can})
\= \Biggl[ \Lambda(\lambda) +b + 
\sum_{i\geq 0} \frac{1}{(i+1)!}(\ad S_{\rm can})^{i} \bigl(D_{1,0}^{\rm pre} (S_{\rm can})\bigr) \,,\, \mc L_{\rm can}\Biggr]
\,.
\end{equation}
Note that $D_{1,0}(Q_{\rm can})\in\mc R\otimes V$, and that
$$
\sum_{i\geq 0} \frac{1}{(i+1)!}(\ad S_{\rm can})^{i} \bigl(D_{1,0}^{\rm pre} (S_{\rm can})\bigr)\in\mc A_{\bm q}\otimes\mf n
\,.
$$
Hence, the identity \eqref{eq:D10} has the form
\begin{equation}\label{to solve}
\psi=[\Lambda(\lambda)+b+\theta,\mc L_{\rm can}]
\,,
\end{equation}
where $\psi\in\mc R\otimes V$ and $\theta\in\mc A_{\bm q}\otimes\mf n$.

In order to conclude the proof that $D_{1,0}=-\partial$ we will show that equation \eqref{to solve},
viewing $\psi\in\mc A_{\bm q}\otimes V$ and $\theta\in\mc A_{\bm q}\otimes\mf n$
as the unknown indeterminates,
has a unique solution given by
$$
\psi=-\partial(Q_{\rm can})\,,\qquad
\theta=Q_{\rm can}-b\,.
$$ 
Recall from \eqref{eq:Lambda} that $\Lambda(\lambda)=\widetilde e_m \lambda+e$. By comparing powers of $\lambda$ in both sides of \eqref{to solve} we get
$$
0=[\widetilde e_m\lambda,Q_{\rm can}-b]
\,,
\quad
\psi=[e,Q_{\rm can}]
+[b+\theta,\partial+e+Q_{\rm can}]
\,.
$$
By the first equation we get that $Q_{\rm can}-b\in\mc A_{\bm q}\otimes\mf n$ (see \eqref{20171109:eq1}).
We rewrite the second equation above as
\begin{equation}\label{tosolve2}
\psi+[e,\theta]=
\phi+[\theta,\partial+Q_{\rm can}]\,,
\end{equation}
where $\phi=[e,Q_{\rm can}]+[b,\partial+e+Q_{\rm can}]$.
Recall the notation $A^{[k]}$ introduced in the proof of Proposition \ref{NQ}
to denote the homogeneous component of $A$ with principal degree $k$. Comparing coefficients of principal degree $k$, $k\leq0$,
in both sides of \eqref{tosolve2} we get
$$
\psi^{[k]}+[e,\theta^{[k-1]}]=
\phi^{[k]}-\partial(\theta^{[k]})
+\sum_{\substack{h,l\leq0\\ h+l=k}}[\theta^{[h]},Q_{\rm can}^{[l]}]\,.
$$
Note that $\theta^{[0]}=0$, hence for $k=0$ the above equation becomes
$$
\psi^{[0]}+[e,\theta^{[-1]}]=
\phi^{[0]}\,,
$$
which uniquely specifies $\psi^{[0]}$ and $\theta^{[-1]}$ due to the direct sum decomposition \eqref{DSgaugeequal}.
For $k \leq -1$, using induction and~\eqref{DSgaugeequal} we obtain at each step the existence and uniqueness of~$\psi^{[k]}$ and~$\theta^{[k-1]}$.
We conclude that the solution $(\psi,\theta)$ to~\eqref{tosolve2}, hence to \eqref{to solve},
exists and it is unique. We finally observe that
$\theta=Q_{\rm can}-b\in\mc A_{\bm q}\otimes\mf n$ and $\psi=-\partial(Q_{\rm can})\in\mc A_{\bm q}\otimes V$ clearly
solve \eqref{to solve}. By uniqueness it then follows that $D_{1,0}(Q_{\rm can})=-\partial(Q_{\rm can})$ thus concluding the proof.
\end{proof}

\subsubsection{The modified DS hierarchy}\label{sec:modDS}

Let us denote by $\mf h=\widetilde{\mf g}^0\subset \widetilde{\mf g}$ the subspace of zero principal degree \eqref{deg:principal}. By definition of the principal gradation, this
subspace is spanned by the Chevalley generators $h_0,\dots, h_\ell$. Recall that the Cartan subalgebra $\mf a^0\subset\mf a$ is spanned by the Chevalley generators
$h_0,\dots,h_{m-1},h_{m+1},\dots h_\ell$. Using the relation $\sum_{i=0}^\ell a_i^\vee h_i=0$ in $\widetilde{\mf g}$ , we have a Lie algebra isomorphism $\mf h\simeq \mf a^0$,
for every choice of the vertex $c_m$ of the Dynkin diagram of $\widetilde{\mf g}$.

It is proven in \cite{DSKV14} (see also \cite{DS}) that the canonical Lie algebra projection $\mf a\twoheadrightarrow\mf a^0\cong\mf h$ induces an injective differential ring homomorphism
\begin{equation}\label{eq:Miuramap}
\mc R\hookrightarrow\mc A(\mc O_{\bm v})
\,,
\end{equation}
where $\bm v=(v_1,\dots,v_\ell)$, and $\mc O_{\bm v}$ is an algebra of ordinary functions on the $\ell$-dimensional space $\mf h$.

It follows that the admissible derivations $D_{a,k}$ on $\mc R$ defined in \eqref{eq:Dak}, induce admissible derivations $D_{a,k}^{\rm mDS}$ on
$\mc A(O_{\bm v})$.
Hence, we have a hierarchy of evolution equations
\begin{equation}\label{eq:modDS}
\frac{\partial v_\alpha}{\partial t_{a,k}}=D_{a,k}^{\rm mDS}(v_\alpha)\,,
\quad\alpha=1,\dots,\ell\,,a=1,\dots,n\,,k\geq0\,,
\end{equation}
which is called the \emph{modified DS hierarchy}. This hierarchy is independent on the choice of the vertex $c_m$ since it is defined in terms of the principal gradation using the isomorphism $\mf h\cong \mf a^0$.

\subsection{A new proof of integrability of the DS hierarchy}\label{sec:int}

The following result was proved in~\cite{DS}, 
for which we outline an alternative proof applying Theorem~\ref{tauint}.
\begin{theorem}[\cite{DS}] \label{thm-main}
The family of the admissible derivations $D_{a,k}:\R \rightarrow \R$ $($defined by~\eqref{eq:Dak}$)$, $a=1,\dots,n$, $k\geq0$, is integrable.
\end{theorem}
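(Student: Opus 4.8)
The plan is to deduce Theorem~\ref{thm-main} directly from Theorem~\ref{tauint}, by checking that the family $D_{a,k}:\mc R\to\mc R$ satisfies all the hypotheses of that theorem. Here the index set is $E=\{(a,k)\mid a=1,\dots,n,\,k\geq0\}$, which is infinite, and I take the distinguished element $\bm 1=(1,0)\in E$. The key structural inputs are already assembled in the preceding subsections: Corollary~\ref{ruu} identifies $\mc R$ as an algebra of differential polynomials in the $\ell$ generators $u_1,\dots,u_\ell$ (so the role of $\mc A_{\bm u}$ in Theorem~\ref{tauint} is played by $\mc R$, with $\ell$ the number of exponents of $\mf a$), Lemma~\ref{OmegaR} guarantees that the $\Omega_{a,k_1;b,k_2}$ lie in $\mc R$, and the symmetry relations \eqref{symm1}--\eqref{symm2} are exactly the tau-structure axioms \eqref{tau1}--\eqref{tau2}.

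First I would verify the tau-structure hypothesis: the family $\{\Omega_{a,k_1;b,k_2}\}$ satisfies \eqref{tau1} by \eqref{symm1} and \eqref{tau2} by \eqref{symm2}, and it is nontrivial (some $\partial\Omega_{i;j}\neq0$) since the two-point correlation functions are genuinely $x$-dependent. Second I would verify that $D_{\bm1}=D_{1,0}$ commutes with every admissible derivation on $\widehat{\mc R}$: this is immediate from Proposition~\ref{D10}, which identifies $D_{1,0}=-\partial$, and $\partial$ lies in the center of $\mathrm{Der}^\partial(\mc R)$ by the very definition of admissibility $[D,\partial]=0$. Third, and this is the genuinely nontrivial hypothesis, I must exhibit distinct indices $i_1,\dots,i_\ell\in E$ for which $\bm V=(\Omega_{\bm1;i_1},\dots,\Omega_{\bm1;i_\ell})$ is a Miura-type $\ell$-ple, i.e.\ satisfies the nondegeneracy \eqref{mondegen}. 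The natural candidate is $i_\alpha=(\alpha,0)$ for $\alpha=1,\dots,\ell$, so that $\bm V=(\Omega_{1,0;a,0})_{a=1}^\ell=(\Omega_{a,0;1,0})_{a=1}^\ell$ using the symmetry \eqref{symm1}; this is precisely the $\ell$-ple of Proposition~\ref{prop:Miura}.

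The main obstacle is therefore confirming \eqref{mondegen} in full generality. Proposition~\ref{prop:Miura} as stated only covers the pair $(c_0,\widehat{\mf g})$ (the special vertex of an untwisted algebra), whereas Theorem~\ref{thm-main} claims integrability for an arbitrary affine Kac--Moody algebra $X_n^{(r)}$ and an arbitrary vertex $c_m$. The remark following Proposition~\ref{prop:Miura} indicates the resolution: the Miura property can be deduced in general from the results of \cite{DSKV13} on the structure of the series $H(\lambda)$ in \eqref{eq:fundamental} together with the relation established in \cite{DVY} between the tau-structure $\Omega_{a,k_1;b,k_2}$ and $H(\lambda)$. I would invoke this to conclude that $\det\bigl(\p V_a^{[0]}/\p u_\beta\bigr)\neq0$ for the chosen $\bm V$, so that \eqref{mondegen} holds.

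Once these three hypotheses are in place, statement~(i) of Theorem~\ref{tauint} yields $[D_{a,k_1},D_{b,k_2}]=0$ for all indices, which is exactly the integrability asserted in Definition~\ref{def:admissible}; the linear independence of the $D_{a,k}$ being clear from the distinct principal degrees $kN_m+m_a$ of the flows. This completes the proof, and I note that statement~(ii) additionally furnishes explicit formulas for the flows $D_{a,k}(u_\alpha)$ in terms of the tau-structure, recovering the DS hierarchy in tau-coordinates $\bm v$, which by Section~\ref{sec:modDS} relate to the modified DS hierarchy. The whole argument is thus purely algebraic, with no appeal to the bi-Hamiltonian formalism used in the original proof of \cite{DS}.
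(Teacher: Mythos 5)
Your reduction to Theorem~\ref{tauint} coincides with the paper's own strategy \emph{for the special vertex}: the same tau-structure input (Lemma~\ref{OmegaR} together with \eqref{symm1}--\eqref{symm2}), the same use of Proposition~\ref{D10} to obtain $D_{\bm 1}=D_{1,0}=-\partial$, which commutes with every admissible derivation, and the same candidate $\bm V=(\Omega_{a,0;1,0})_{a=1}^\ell$ for the Miura-type $\ell$-ple. Up to that point the argument is sound. The gap is in your third step. You read the remark following Proposition~\ref{prop:Miura} as saying that the nondegeneracy \eqref{mondegen} can be established for an arbitrary affine algebra $X_n^{(r)}$ and an arbitrary vertex $c_m$ via the series $H(\lambda)$ of \eqref{eq:fundamental} and the results of \cite{DSKV13,DVY}. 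That remark asserts something strictly weaker: that Proposition~\ref{prop:Miura} itself --- stated only for the pair $(c_0,\widehat{\mf g})$ --- admits an \emph{alternative derivation} by that route; it does not extend the proposition to other vertices. Indeed, the remark placed after the proof of Theorem~\ref{thm-main} states explicitly that a unified proof of the Miura-type property of the vector \eqref{eq:Vmiura} for general $(X_n^{(r)},c_m)$ is not available: it has been checked in examples and is only \emph{expected} to hold in general. So your step invokes an open conjecture, and your proof as written only covers the special vertex.

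The paper closes exactly this gap by a different mechanism, which your proposal omits: the modified DS hierarchy of Section~\ref{sec:modDS}. Having obtained integrability for the choice $c_0$ from Theorem~\ref{tauint}, one observes that the injective differential ring homomorphism \eqref{eq:Miuramap} intertwines the derivations $D_{a,k}$ with the derivations $D_{a,k}^{\rm mDS}$, so integrability for $c_0$ forces integrability of the family $D_{a,k}^{\rm mDS}$. Since the modified hierarchy is independent of the choice of the vertex (it is defined through the isomorphism $\mf h\cong\mf a^0$ and the principal gradation), and since \eqref{eq:Miuramap} is injective for \emph{every} $c_m$ and can be inverted in the sense of the Dubrovin--Zhang theory \cite{DZ-norm}, the vanishing of the commutators descends back to the derivations $D_{a,k}$ on $\mc R$ for an arbitrary vertex $c_m$. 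To repair your proof, replace your appeal to a general Miura property by this transfer argument (or else supply a genuine proof of \eqref{mondegen} for all $(X_n^{(r)},c_m)$, which the authors themselves state they do not have).
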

\begin{proof}
Let $\Omega_{a,k_1;b,k_2}\in{\mathcal R}$, $a,b=1,\dots,n$, $k_1, k_2\geq 0$, be defined by~\eqref{tau-symm-eq}.
From Lemma~\ref{OmegaR} and equation~\eqref{symm2} we have that 
\[D_{a,k_1}(\Omega_{b,k_2;c,k_3})  \= D_{c,k_3}(\Omega_{a,k_1;b,k_2}) \]
for every $a,b,c=1,\dots, n$, $k_1,k_2,k_3\geq0$.
Hence, together with equation \eqref{symm1}, the elements $\Omega_{b,k_2;c,k_3}$, $b,c=1,\dots, n$, $k_2,k_3\geq0$
form a tau-structure associated to 
the family $D_{a,k}$, $a=1,\dots,n$, $k\geq0$ (the requirement that at least 
one of $\Omega_{b,k_2;c,k_3}$ is not a constant can be verified easily). 

For the choice of the special vertex $c_0$ of the Dynkin diagram of the Kac-Moody algebra $\widehat{\mf g}$
we have, by Proposition \ref{prop:Miura}, that the vector $\bm V$ defined by \eqref{eq:Vmiura} 
is a Miura-type $\ell$-ple.
Furthermore, by Proposition \ref{D10} we have that $D_{1,0}=-\partial$. This implies that $[D_{1,0},D_{a,k}]=0$, for every $a=1,\dots, n$ and $k\geq0$.
The claim of the theorem, in the case of the choice of the special vertex $c_0$, then follows by Theorem~\ref{tauint}. In particular, in view of the results of Section \ref{sec:modDS},
it follows that the family of the admissible derivations $D_{a,k}^{\rm mDS}$ is integrable as well.
Recall also that the modified DS hierarchy does not depend on the choice of the vertex of the Dynkin diagram $c_m$ and that the differential ring homomorphism \eqref{eq:Miuramap} is injective for any choice of $c_m$, thus this homomorphism can be inverted in view of the Dubrovin-Zhang theory \cite{DZ-norm}. The claim of the theorem then follow for any choice of the vertex $c_m$. 
\end{proof}

\begin{remark}
Already during the preparation of the paper~\cite{DVY} we wanted to give a unified proof of the 
Miura-type property of the vector ${\bf V}$ defined by~\eqref{eq:Vmiura}, as this would give a new way 
of writing the DS hierarchies (generalizing the new algorithm given in~\cite{BDY} for the untwisted case with the choice of the 
special vertex). We were able to do this in several new examples, but we still expect that it is true in general and that a unified proof could be found.
\end{remark}

As a consequence of Theorem \ref{thm-main} the DS hierarchy of type $(X_n^{(r)},c_m)$ is integrable and admits a tau-structure
$\Omega_{a,k;b,l}$, $a,b=1,\dots,n$, $k,l\geq0$, given by Definition \ref{tau-symm-def}. 
Hence, for suitable space of functions in space-time $\mc B$ described in Section \ref{sec:fun}, 
there exists a tau-function of an arbitrary solution to 
the hierarchy.

\begin{remark}
For the DS hierarchy of type $(X_n^{(1)}, c_0)$ one can construct the so-called \emph{topological solution}, see e.g. \cite{BDY, Dickey, DLZ0, LRZ}.
The corresponding topological tau-function has an important application in the quantum singularity theory 
(see \cite{BDY, DVY, LRZ} and the references therein).

For an arbitrary affine Kac-Moody algebra and a vertex of its Dynkin diagram, 
we define the {\it generalized Br\'ezin--Gross--Witten (BGW) solution} to the DS hierarchy~\eqref{DShier} 
 as the unique solution
$$u^{\rm gBGW}(x,\bm t)=(u_1^{\rm gBGW}(x,\bm t),\dots,u_\ell^{\rm gBGW}(x,\bm t)),\quad \bm t =(t_{a,k}\mid a=1,\dots,n\,,k\geq0),$$ 
 specified by the following initial data:
$$
u_\alpha^{\rm gBGW}(x,0) = \frac{C_\alpha}{(1-x)^{m_\alpha+1}}, \quad \alpha=1,\dots, \ell\,,
$$
where $C_\alpha$, $\alpha=1,\dots,\ell$, are arbitrarily given constants. 
We refer to the tau-function (cf. \eqref{eq:tau}) of the generalized BGW solution
$u^{\rm gBGW}(x,\bm t)$
as the {\it generalized BGW tau-function} for the DS 
hierarchy~\eqref{DShier}.
The generalized BGW tau-function for the KdV hierarchy was introduced in 
a special case by Br\'ezin--Gross~\cite{BG1} and Gross--Witten~\cite{GW}, 
and in general by Mironov--Morozov--Semenov~\cite{MMS} (see also~\cite{A}). 
For an untwisted affine Kac--Moody algebra and $c_0$ 
the above definition of the generalized BGW tau-function was given 
in joint work by Dubrovin, Zagier and the second-named author of the present paper.
\end{remark}

\begin{appendix}
\section{Discrete integrable systems and tau-structure}\label{App:A}
The notion of tau-structure generalizes to other classes of dynamical systems.
We consider in this appendix the case of differential-difference equations as an example.

In this framework, the algebra of differential polynomials $\mc A$ in the variables $\bm u=(u_1,\dots,u_\ell)$, should be replaced by the \emph{ring of difference polynomials} $(\mc A^{\textrm{discr}},S)$, that is the datum of a ring $\mc A^{\textrm{discr}}$ containing infinitely many variables $u_{\alpha,m}$, $\alpha=1,\dots,\ell\,,m\in\mb Z$, and an automorphism $S:\mc A^{\textrm{discr}}\to\mc A^{\textrm{discr}}$ (namely $S$ is invertible and $S(fg)=S(f)S(g)$, for every $f,g\in\mc A^{\textrm{discr}}$) such that $S(u_{\alpha,m})=u_{\alpha,m+1}$.

Note that, under some mild assumptions, the pair $(\mc A^{\textrm{discr}},S)$ can be embedded to~$(\widehat{\mathcal{A}},\p)$ via:
\[u_{\alpha,m}\to 
e^{\epsilon m\p}(u_\alpha)\,, \qquad S \rightarrow e^{\epsilon \p}\,, \]
where,
for an operator $X:\widehat{\mc A}\to\widehat{\mc A}$,
the formal operator $e^X$ is defined using the Taylor expansion of the exponential function $e^X= \sum_{k\geq0} \frac {X^k}{k!}$. Hence, the definition
of tau-structure given in Section \ref{section2} can be specialized to the pair $(\mc A^{\textrm{discr}},S)$. We give the details of this specialization in the remaining of this appendix.

An operator $D: \mc A^{\textrm{discr}}\rightarrow \mc A^{\textrm{discr}}$ is called 
a derivation if \[D(ab) \= D(a)  \, b \+ a \, D(b)\] for any $a,b\in\mc A^{\textrm{discr}}$. 
A derivation 
is called admissible, 
if $[D,S]=0$. Any $\ell$-ple $W=(W_1,\dots,W_\ell)\in(\mc A^{\textrm{discr}})^\ell$ defines
an admissible derivation~$D_W$ via $D_W(u_{\alpha,m}):=S^m(W_\alpha)$, 
$\alpha=1,\dots,\ell\,,m\geq 0$ and the Leibniz rule. 
Conversely, any admissible derivation~$D:\mc A^{\textrm{discr}}\to\mc A^{\textrm{discr}}$ comes 
from a unique~$\ell$-ple~$(D(u_1),\dots,D(u_\ell))\in(\mc A^{\textrm{discr}})^\ell$ (see \eqref{isoder}).

Let
$D_j$, $j\in E$, be a family of infinitely many linearly independent 
admissible derivations on~$\mc A^{\textrm{discr}}$.
The family $D_j$, $j\in E$, is called {\it integrable} if 
\[
[D_i,D_j]\=0\,,\quad \text{for every }i,j\in E\,.
\] 
The above notions extend, as done in Section \ref{section2}, to the case of derivations $D:\widehat{\mc A^{\textrm{discr}}}\to\widehat{\mc A^{\textrm{discr}}}$, where $\widehat{\mc A^{\textrm{discr}}}\subset\widehat{\mc A}$
is a certain completion of $\mc A^{\textrm{discr}}$.

We say that the family $D_j$, $j\in E$, {\it admits a tau-structure}
if there exist $\Omega_{i;j}\in \widehat{\mc A^{\textrm{discr}}}$, $i,j\in E$, such that
at least one of $\Omega_{i;j}$ is not a constant, and that for every $i,j,k\in E$ we have
\[\Omega_{i;j} \= \Omega_{j;i} \,, \quad 
D_i \, \bigl(\Omega_{j;k}) \= D_j \, \bigl(\Omega_{k;i}\bigr) \= D_k \bigl(\Omega_{i;j}\bigr) \, . \]

An $\ell$-ple 
\begin{equation}\label{eq:V}
\bm V=\bigl(V_1(\bm u,\bm u_{,\pm1},\bm u_{,\pm2},\dots;\epsilon),\dots,V_\ell(\bm u,\bm u_{,\pm1},\bm u_{,\pm2},\dots;\epsilon)\bigr)\in(\widehat{\mc A^{\textrm{discr}}})^\ell
\end{equation} 
is called {\it of Miura-type} if the condition in \eqref{Jacobiannondegen} holds.
Denote by $\mc A^{\textrm{discr}}_{\bm v}$ the ring of difference polynomials in the variables $\bm v=(v_1,\dots,v_\ell)$, and
let~$\bm V$ as in \eqref{eq:V} be a Miura-type $\ell$-ple.
We associate with 
this Miura-type $\ell$-ple a Miura-type map $\phi_{\bm V}: \widehat{\mc A^{\textrm{discr}}_{\bm v}}
\rightarrow\widehat{\mc A^{\textrm{discr}}}$
by assigning  on generators
\[
v_{\alpha}\in\widehat{\mc A^{\textrm{discr}}_{\bm v}}  \quad \mapsto \quad
\phi_{\bm V}(v_{\alpha})=V_{\alpha}(\bm u,\bm u_{1},\bm u_{2},\dots;\epsilon) \in\widehat{\mc A^{\textrm{discr}}}\,,
\]
and by extending it with the property that $\phi_{\bm V}\left(S (a)\right)=S\left(\phi_{\bm V}(a)\right)$, for every $a\in\widehat{\mc A^{\textrm{discr}}_{\bm v}}$.
As in Section \ref{section2}, it can be checked that for every $\alpha\in\{1,\dots,\ell\}$ there exists a unique element 
$U_\alpha(\bm v,\bm v_{\pm1},\bm v_{\pm2},\dots;\epsilon)\in\widehat{\mc A^{\textrm{discr}}_{\bm v}}$ such that
$$
\phi_{\bm V}(U_\alpha(\bm v,\bm v_{\pm1},\bm v_{\pm2},\dots;\epsilon))
\=u_\alpha \,,
$$
and that the element  $\bm U=(U_\alpha(\bm v,\bm v_{\pm1},\bm v_{\pm2},\dots;\epsilon))_{\alpha=1,\dots,\ell}\in(\widehat{\mc A^{\textrm{discr}}_{\bm v}})^{\ell}$
gives a ring homomorphism $\psi_{\bm U}: \widehat{\mc A^{\textrm{discr}}}\rightarrow \widehat{\mc A^{\textrm{discr}}_{\bm v}}$ (by assigning on generators
\[
u_{\alpha}\in\widehat{\mc A^{\textrm{discr}}}  \quad \mapsto \quad
\psi_{\bm U}(u_{\alpha})=U_{\alpha}(\bm v,\bm v_{1},\bm v_{2},\dots;\epsilon)\in\widehat{\mc A^{\textrm{discr}}_{\bm v}} \,,
\]
and requiring that $\psi_{\bm U}$ and $S$ commute), such that
$\phi_{\bm V}\circ \psi_{\bm U}={\rm id}_{\widehat{\mc A^{\textrm{discr}}}}$ and
$\psi_{\bm U}\circ \phi_{\bm V} = {\rm id}_{\widehat{\mc A^{\textrm{discr}}_{\bm v}}}$.
The homomorphism $\psi_{\bm U}$ is the inverse Miura-type map of~$\phi_{\bm V}$.
Proposition~\ref{tauint} holds verbatim in this framework. 

\end{appendix}

\end{document}